\documentclass[letterpaper]{article}
\newif\ifarxivversion
\arxivversiontrue
\ifarxivversion
  \usepackage[preprint]{aaai2027}
\else
  \usepackage[submission]{aaai2027}
\fi
\usepackage[hyphens]{url}
\usepackage{graphicx}
\usepackage{natbib}
\usepackage{booktabs}
\usepackage{amsmath}
\usepackage{amssymb}
\usepackage{amsthm}
\usepackage{algorithm}
\usepackage{algorithmic}

\newtheorem{theorem}{Theorem}
\newtheorem{lemma}{Lemma}
\newtheorem{proposition}{Proposition}
\newtheorem{corollary}{Corollary}
\newtheorem{assumption}{Assumption}
\theoremstyle{definition}
\newtheorem{definition}{Definition}
\theoremstyle{remark}
\newtheorem{remark}{Remark}

\newcommand{\E}{\mathbb{E}}
\newcommand{\Prb}{\mathbb{P}}
\newcommand{\Reg}{\mathrm{Reg}}
\newcommand{\1}{\mathbf{1}}
\newcommand{\C}{\mathcal C}
\newcommand{\empt}{\emptyset}
\newcommand{\wt}{\widetilde}
\newcommand{\wh}{\widehat}
\newcommand{\UCB}{\mathrm{UCB}}
\newcommand{\FUCB}{\mathrm{F\text{-}UCB}}
\newcommand{\OPT}{\mathrm{OPT}}

\newif\ifcomments
\commentsfalse

\ifcomments
  \newcommand\shi[1]{\textcolor{green}{[Shi: #1]}}
  \newcommand\yc[1]{\textcolor{blue}{[Yiling: #1]}}
    \newcommand\sadie[1]{\textcolor{orange}{[Sadie: #1]}}
\else
  \newcommand\shi[1]{}
  \newcommand\yc[1]{}
  \newcommand\sadie[1]{}
\fi

\title{Contextual Procurement Auctions with Bandit Learning}
\ifarxivversion
\author{Yiling Chen\thanks{Authors are listed alphabetically by last name.},
Shi Feng, and Sadie Zhao}
\affiliations{
Harvard University\\
\texttt{yiling@seas.harvard.edu},
\texttt{shifeng-thu@outlook.com},
\texttt{sadie\_zhao@g.harvard.edu}
}
\else
\author{}
\affiliations{}
\fi

\begin{document}

\maketitle

\begin{abstract}
We study repeated procurement auctions in which producers have private costs and the platform must learn the context-dependent value of selecting each producer. 
We evaluate performance by welfare regret: the cumulative loss in total surplus relative to the full-information efficient rule that knows the context-dependent values and true producer costs.
%
%
The natural UCB allocation rule achieves $\widetilde O(\sqrt{ngT})$ welfare regret under truthful bids, but its adaptive, bid-dependent learning path does not by itself ensure truthfulness. To obtain exact incentives, we first design a bid-independent explore-then-commit mechanism with empirical threshold payments; it is dominant-strategy truthful and has $\widetilde O((ng)^{1/3}T^{2/3})$ regret. We then introduce frozen-payment UCB, which estimates payments from initial bid-independent exploration but continues allocation learning by UCB. Under a truthful-path margin condition, the frozen-payment UCB is approximately truthful with an average per-round deviation gain $\widetilde O(T^{-1/4})$ for fixed $n$, $g$. Under truthful bidding, it achieves $\widetilde O(\sqrt{ngT})$ welfare regret, matching the UCB rate. \yc{I changed the previous sentence. It seems that the regret is still under truthful bidding. There is a bit of subtlety here. }
A lower bound shows that this regret-incentive tradeoff is tight within the frozen critical-payment class.
\end{abstract}

\section{Introduction}

Auctions often allocate scarce resources whose productive value is initially
uncertain, not merely extract revenue~\cite{vickrey1961counterspeculation};
see also Milgrom~\cite{milgrom2004putting} and
Krishna~\cite{krishna2009auction}. A historical motivation is the leasing of
the Laurion silver mines in classical Athens, where inscriptions record leases
and public sales administered by the \emph{poletai}
~\cite{crosby1950leases,lalonde1991agora,aperghis1998reassessment}. From a
modern perspective, a lease had context-dependent surplus: geology, prior
excavation, nearby discoveries, technology, costs, and silver demand all
mattered. A highest-bid lease sale could raise public revenue, but it need not
allocate the right to the operator who would generate the greatest total
surplus. The counterfactual design question is therefore: how should a public
operator allocate resources when value must be learned from experience?

Laurion is only a historical motivation; our model is a platform repeatedly
procuring a context-dependent service. Electricity markets are the leading application:
operators procure demand response, reserves, and reliability services under
changing weather, load, renewable output, and congestion. The spot-pricing and
network-pricing literature treats efficient dispatch as social-surplus or
system-cost optimization
~\cite{bohn1984optimal,schweppe1988spot,hogan1992contract}, while bandit demand
response adds learning~\cite{jain2020multiarmed}. Here \(\mu_{i,c}\) is the
expected system value delivered by resource \(i\) in context \(c\), and \(k_i\)
is its privately reported standing activation cost. The same structure covers
LLM-era compute allocation, where the value of GPU or inference capacity depends on task,
latency, quality, and provider costs~\cite{ghodsi2011dominant,patra2026truthful},
and environmental procurement of abatement, conservation, or carbon removal
~\cite{coase1960problem,dales1968pollution,montgomery1972markets,cramton2002tradeable,latacz1997auctioning}.

We study a repeated contextual procurement auction that captures the procurement
side of these allocation problems. There are \(n\) producers and a finite set of
contexts. Producer \(i\) has a private cost \(k_i\). When producer \(i\) is
selected in context \(c\), the platform obtains an unknown gross value with mean
\(\mu_{i,c}\), observes only the selected producer-context outcome, and learns
from this bandit feedback. Under truthful bidding, the social surplus of
selecting \(i\) in context \(c\) is \(\mu_{i,c}-k_i\), and the outside option has
surplus zero. The full-information benchmark is the efficient procurement
mechanism that knows \(\mu\) and, in each context, selects the producer with the
largest nonnegative surplus. Our goal is to design learning mechanisms with low
welfare regret while controlling strategic incentives.

\smallskip
\noindent{\bf Our contributions.}
We formulate contextual procurement with private costs and unknown values,
using total-surplus regret so the bounds have a direct welfare interpretation.
We show that UCB allocation achieves \(\widetilde O(\sqrt{ngT})\) regret under
truthful bidding. We then give a bid-independent explore-then-commit mechanism
with empirical critical payments; it is dominant-strategy truthful and has
\(\widetilde O((ng)^{1/3}T^{2/3})\) regret. Finally, we introduce frozen-payment
UCB, which separates bid-independent payment learning from adaptive allocation
learning. Under a truthful-path margin condition, its near-UCB tuning obtains
\(\widetilde O(\sqrt{ngT})\) regret and average per-round deviation gain
\(\widetilde O(T^{-1/4})\) for fixed \(n,g\); the balanced tuning is in the
supplement. For the same post-exploration benchmark, our lower bounds match
the horizon exponents over the frozen critical-payment class.

\section{Related Work}

\smallskip
\noindent{\bf Auction design and welfare benchmarks.}
Our full-information benchmark is the efficient single-parameter procurement
mechanism, implemented by critical payments in the VCG tradition
~\cite{vickrey1961counterspeculation,clarke1971multipart,groves1973incentives};
see~\cite{nisan2007algorithmic}. This welfare benchmark differs from Myerson's
revenue benchmark, where virtual values or virtual costs characterize optimal
mechanisms~\cite{myerson1981optimal}. Our regret is measured directly in
surplus \(\mu_{i,c}-k_i\), not in transformed virtual costs; the supplement
records related procurement and approximate-IC work.

\smallskip
\noindent{\bf Bandit learning and strategic allocation.}
The learning side is closest to stochastic bandits
~\cite{robbins1952some,lai1985asymptotically,auer2002finite,bubeck2012regret}
and contextual bandits~\cite{langford2008epoch,lu2010contextual}.
In strategic bandit allocation, the learning path depends on bids, making
monotonicity and payment computation delicate. Prior work characterizes truthful
multi-armed-bandit mechanisms and their costs, and uses resampling to obtain
truthful-in-expectation mechanisms
~\cite{babaioff2014characterizing,devanur2009price,babaioff2015truthful}. Our
route is different: ETC uses bid-independent exploration for exact truthfulness,
while F-UCB freezes payment estimates to obtain approximate truthfulness.
Related incentive-aware bandit mechanisms appear in crowdsourcing,
expertsourcing, and demand response
~\cite{jain2016deterministic,jain2018quality,jain2020multiarmed}. Strategic
combinatorial bandit mechanisms are also related
~\cite{chen2013combinatorial,gao2021auction}, but these works do not isolate
the frozen critical-payment tradeoff studied here.

\smallskip
\noindent{\bf Contextual auction learning.}
Recent contextual auction-learning papers study sponsored-search mechanisms,
revenue-regret objectives, or bidder learning in fixed auction formats
~\cite{abhishek2020designing,zhang2024online,han2025optimal}. Our setting
instead asks the platform to design allocation and payment rules while
controlling welfare regret and producer incentives. The closest recent work is
Patra, Damle, Padala, and Gujar, who study truthful reverse auctions for
adaptive LLM provider selection via contextual MABs~\cite{patra2026truthful}.\footnote{
Two proof steps in~\cite{patra2026truthful} appear to require additional
justification. Claim 1 uses bid-independence of value estimates, but forced
exploration depends on an active set computed from bid-dependent OVS scores.
Claim 2 appears to need a weaker active-set monotonicity statement, since
lowering one provider's virtual cost can increase the maximum OVS and exclude
other providers. These gaps do not rule out an approximate-truthfulness reading,
but they leave exact truthfulness of the stated algorithm unestablished by the
current proof.}
Their regret is defined relative to a Myerson-style virtual-cost benchmark. This
is natural for that mechanism-design objective, but virtual-cost regret is not a
realized welfare, payment, or buyer-surplus loss because virtual costs include
distributional information-rent corrections. We instead target welfare regret
relative to the efficient surplus benchmark, as is appropriate when the operator
acts as a planner and payments are instruments for elicitation and coordination.
A virtual-cost analogue under additional regularity is recorded in the supplement.

\section{Model and Full-Information Benchmark}

We study a repeated contextual procurement problem with unknown product
values. There are $n$ producers, indexed by $i\in[n]$. Producer $i$ owns a
single product $a_i$, whose context-dependent gross value is unknown to the
platform and must be learned from bandit feedback.

Producer $i$ has private production cost $k_i\in[0,1]$. At the beginning of
the interaction, producer $i$ submits an ask bid $b_i\in[0,1]$, which remains
fixed throughout the horizon. Truthful bidding means $b_i=k_i$.

There is a finite context set $\C=\{1,\dots,g\}$.
At each round $t=1,\dots,T$, a context $c_t\in\C$ is observed. The context
process is exogenous and action-independent. It may be stochastic and depend
on prior contexts, but its conditional law is unaffected by bids, allocations,
reward realizations, or mechanism randomization. In particular, changing a
producer's report does not change the law of future contexts.

Let $\mathcal F_{t-1}$ be the history immediately before context $c_t$ is
observed. If product $a_i$ is selected in context $c$, the platform observes
a bounded random outcome $Y_t(i)\in[0,1]$ with
$\E[Y_t(i)\mid \mathcal F_{t-1},c_t=c]=\mu_{i,c}$.
Thus, for every adaptively selected producer--context pair, the centered
observations form a bounded martingale-difference sequence. This conditional
mean assumption, rather than a mean conditional only on the current context,
is what permits the uniform UCB concentration bounds below. The unknown mean
matrix is $\mu=\{\mu_{i,c}\}_{i\in[n],c\in\C}$. The platform observes only
bandit feedback: after choosing $i_t$, it
observes $Y_t(i_t)$ and does not observe the counterfactual outcomes of other
products.

If producer $i$ is selected and paid $p_{i,t}$, its realized utility is
$p_{i,t}-k_i$. The realized social surplus contribution is $Y_t(i)-k_i$.

The learning objective in this paper is welfare-maximizing efficient
procurement under truthful costs. Payments are transfers between the platform
and producers and therefore cancel from social surplus; they are used to control
incentives, while regret is measured against the full-information efficient
allocation.

Throughout the paper, all tie-breaking rules are fixed ex ante and independent
of the submitted bids.

For a mechanism $\mathcal M$, let
$\mathcal U_i^{\mathcal M}(b_i,b_{-i};k_i)$ denote producer $i$'s total
cumulative utility\sadie{cumulative?}\shi{fixed} over the $T$ rounds when its true cost is $k_i$ and it
reports $b_i$. Expectations are over the context process, reward noise, and
mechanism randomness. \sadie{I would slightly prefer to define approximate truthfulness earlier in the model section and we can add ``exact truthfulness is when $\epsilon=0$''.}\shi{fixed}
\begin{definition}[Approximate truthfulness]
\label{def:approx-truth}
Fix producer $i$, true cost $k_i$, and other bids $b_{-i}$. A mechanism
$\mathcal M$ is $\epsilon$-approximately truthful at
$(i,k_i,b_{-i})$ if, for every fixed report $b_i'\in[0,1]$,
\[
    \E\!\left[
        \mathcal U_i^{\mathcal M}(k_i,b_{-i};k_i)
    \right]
    \ge
    \E\!\left[
        \mathcal U_i^{\mathcal M}(b_i',b_{-i};k_i)
    \right]
    -
    \epsilon .
\]
\end{definition}
Exact truthfulness is the special case $\epsilon=0$, namely dominant-strategy
truthfulness (DSIC). For $\epsilon>0$, this is a pointwise
$\epsilon$-approximately truthful, or approximate-DSIC, guarantee rather than
one averaged over types; setting $b_{-i}=k_{-i}$ further yields an
$\epsilon$-approximate ex-post Nash equilibrium. This is the usual additive
relaxation of exact incentive constraints used in the approximate-IC literature;
see, for example, \cite{balseiro2024mechanism}. The expectation is only over
the context process, reward noise, and mechanism randomness. Thus $\epsilon$ is
a cumulative deviation gain, and $\epsilon/T$ is the corresponding average
per-round incentive.

\subsection{Full-Information Benchmark}

We first define the benchmark mechanism that knows the true value matrix
$\mu$. For a bid profile $b=(b_1,\dots,b_n)$ and context $c$, define producer
$i$'s score as $S_i(c;b)=\mu_{i,c}-b_i$. The outside option $\empt$ has score
$S_{\empt}(c;b)=0$.
The full-information efficient procurement allocation rule chooses
\[
    a^\star(c;b)
    \in
    \arg\max_{a\in[n]\cup\{\empt\}} S_a(c;b).
\]
Equivalently, the platform procures from a producer only when the largest
score is nonnegative.

For producer $i$, define the competing threshold and the
full-information VCG critical payment by
\[
    H_i(c;b_{-i})
    =
    \max\left\{
        0,
        \max_{j\neq i}
        \left(\mu_{j,c}-b_j\right)
    \right\},
\]
and
\[
    q_i^\star(c;b_{-i})
    =
    \left[
        \mu_{i,c}-H_i(c;b_{-i})
    \right]_{[0,1]},
\]
where $[x]_{[0,1]}=\min\{1,\max\{0,x\}\}$\sadie{ and $q_i^\star(c;b_{-i})$ equal to the highest payment at which selecting $i$ remains competitive with the platform’s best alternative}.\shi{fixed}
Thus, $q_i^\star(c;b_{-i})$ is the highest payment, equivalently reported
cost, at which selecting $i$ remains competitive with the platform's best
alternative.
If producer $i$ is selected, the platform pays $q_i^\star(c;b_{-i})$. If
producer $i$ is not selected, it receives zero.
\sadie{Maybe we can add an half-sentence intuition for competing threshold and VCG critical payment since $[x]_{[0,1]}=\min\{1,\max\{0,x\}\}$ can be confusing and not everyone is familiar with VCG? } \shi{fixed}

Because the score $\mu_{i,c}-b_i$ is strictly decreasing in $b_i$ and
tie-breaking is fixed ex ante, the allocation rule is monotone nonincreasing in
$b_i$: a producer who reports a higher cost is weakly less likely to be
selected. The critical-payment rule therefore implements the full-information
efficient allocation truthfully, as in the VCG/critical-value tradition of
Vickrey~\cite{vickrey1961counterspeculation},
Clarke~\cite{clarke1971multipart}, and Groves~\cite{groves1973incentives}.

\begin{proposition}[Full-information truthfulness and individual rationality]
\label{prop:oracle-dsic}
The full-information efficient allocation rule with the critical payments above
is dominant-strategy truthful and ex-post individually rational.
\end{proposition}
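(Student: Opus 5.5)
Because the full-information mechanism knows $\mu$ and the bids are fixed over the horizon, I would reduce the statement to a single round and a fixed context $c$. The allocation $a^\star(c;b)$ depends only on $\mu$, $c$, $b$ and an ex ante tie-breaking rule. The context process is exogenous, so a producer's report does not change the law of contexts. Producer $i$'s cumulative utility is therefore $\sum_t \1\{a^\star(c_t;b)=i\}\,(q_i^\star(c_t;b_{-i})-k_i)$, and it suffices to show that for every fixed $c$ the per-round utility $u_i(b_i;c)=\1\{a^\star(c;b)=i\}(q_i^\star(c;b_{-i})-k_i)$ is maximized at $b_i=k_i$ and is nonnegative there. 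Summing over rounds and taking expectations then gives DSIC and ex-post IR.

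For a fixed $c$, write $H=H_i(c;b_{-i})$; it does not depend on $b_i$. Producer $i$ is selected when $\mu_{i,c}-b_i>H$. It is not selected when $\mu_{i,c}-b_i<H$. When $\mu_{i,c}-b_i=H$, the fixed tie-breaking rule decides. So $i$ wins exactly when $b_i$ lies below the threshold $\theta=\mu_{i,c}-H$, up to the tie point, and the payment upon winning is the clipped threshold $q=[\theta]_{[0,1]}$, which is independent of $b_i$. I would handle the clipping as follows. Since $H\ge 0$ and $\mu_{i,c}\le 1$, we have $\theta\le 1$. If $\theta<0$, then no bid $b_i\in[0,1]$ wins, because $\mu_{i,c}-b_i\le\mu_{i,c}<H$; utility is identically zero and truthfulness is trivial. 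Otherwise $q=\theta$, and the payment equals the critical bid exactly.

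The standard critical-value comparison then gives truthfulness. If $k_i<\theta$, truthful bidding wins and yields $\theta-k_i>0$. Any winning deviation yields the same amount, and any losing deviation yields $0$. If $k_i>\theta$, truthful bidding loses and yields $0$, while any winning deviation yields $\theta-k_i<0$. If $k_i=\theta$, every outcome yields $0$. Ex-post IR follows because a truthful winner has $k_i\le\theta=q$, which is nonnegative utility, and losers get zero. This also holds pointwise for every realization of $Y_t$, since payments do not depend on realized outcomes.

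The only delicate step is the tie case together with clipping. At $b_i=\theta$ the tie-breaking rule may or may not select $i$, and I need the utility at that point to be $0$ either way. I would check that $q-k_i=\theta-k_i=0$ when $k_i=\theta$. I would also record that fixed tie-breaking keeps the allocation monotone in $b_i$, since it is a fixed function of the scores, so the winning region is an interval.
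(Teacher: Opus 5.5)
Your proposal is correct and follows essentially the same route as the paper: fix a context, set $\theta=\mu_{i,c}-H_i(c;b_{-i})$, and run the critical-value case analysis. In that analysis no feasible bid wins when $\theta<0$; otherwise the payment equals $\theta$, types below it win profitably, types above it lose, and the tie point gives zero utility. The differences are cosmetic: you make the per-round reduction via exogenous contexts explicit, and you correctly note that the paper's separate case $\theta>1$ is vacuous because $\mu_{i,c}\le 1$ and $H_i(c;b_{-i})\ge 0$.
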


Under truthful bidding, the full-information benchmark is
\[
    \OPT(T)
    =
    \sum_{t=1}^T
    \max\left\{
        0,
        \max_{i\in[n]}
        \left(\mu_{i,c_t}-k_i\right)
    \right\}.
\]
For any learning allocation rule that selects
$i_t\in[n]\cup\{\empt\}$ at round $t$, define its regret under
truthful bidding by the convention that the selected surplus is zero when
\(i_t=\empt\):
\[
\begin{aligned}
    B_t
    &:=
    \max\left\{0,\max_{i\in[n]}(\mu_{i,c_t}-k_i)\right\},\\
    \Reg(T)
    &:=
    \sum_{t=1}^T
    \left[
        B_t
        -
        \1\{i_t\neq\empt\}
        (\mu_{i_t,c_t}-k_{i_t})
    \right].
\end{aligned}
\]

This is contextual pseudo-regret because it is written in terms of the mean
values $\mu_{i,c}$. Under the conditional-mean assumption, its expectation equals
the expected realized welfare loss relative to the full-information efficient
allocation: reward noise has conditional mean zero, and payments are transfers.

\section{Explore-Then-Commit}
\label{sec:etc}

We first define an explore-then-commit (ETC) mechanism. It separates learning
from strategic allocation: the platform explores using a bid-independent rule,
freezes the resulting value estimates, and then runs the efficient critical-price
mechanism for the empirical instance. This gives a slower regret rate than UCB
allocation, but it gives exact dominant-strategy truthfulness. The complete
mechanism is given in Algorithm~\ref{alg:etc}.

Let $M\in\{1,\dots,T\}$ be the exploration length.

\begin{algorithm}[t]
\caption{Explore-then-commit procurement}
\label{alg:etc}
\begin{algorithmic}[1]
\STATE Set \(M=\min\{T,\lceil(ng)^{1/3}T^{2/3}\rceil\}\) unless specified.
\FOR{\(t=1,\ldots,M\)}
    \STATE Explore bid-independently; pay \(1\) and observe \(Y_t(i_t)\).
\ENDFOR
\STATE Compute \(\wh\mu^0\) and the empirical critical prices \(\wh q_i^0(c;b_{-i})\).
\FOR{\(t=M+1,\ldots,T\)}
    \STATE Choose the empirical-score maximizer, with outside score \(0\).
    \STATE If producer \(i\) is selected, pay \(\wh q_i^0(c_t;b_{-i})\); otherwise pay \(0\).
\ENDFOR
\end{algorithmic}
\end{algorithm}

\emph{Exploration phase.} ETC uses a bid-independent rule for the first $M$
rounds and pays each selected producer $1$ (or any bid-independent constant
$\ge\max_i k_i$). Let $N^0_{i,c}$ be the number of observations of $(i,c)$ and
$\wh\mu^0_{i,c}$ its empirical mean. These payments ensure exploration-phase
individual rationality without affecting incentives.

\emph{Commit phase.} ETC freezes $\wh\mu^0$. In each round $t>M$, define
$\wh S_i^0(c;b)=\wh\mu^0_{i,c}-b_i$ for $i\in[n]$ and
$\wh S_{\empt}^0(c;b)=0$. The mechanism chooses
$i_t\in\arg\max_{a\in[n]\cup\{\empt\}}\wh S_a^0(c_t;b)$. For the selected
producer $i_t=i$, define
$\wh H_i^0(c_t;b_{-i})=\max\{0,\max_{j\neq i}(\wh\mu^0_{j,c_t}-b_j)\}$ and
pay the empirical VCG critical payment
$\wh q^0_i(c_t;b_{-i})=[\wh\mu^0_{i,c_t}-\wh H_i^0(c_t;b_{-i})]_{[0,1]}$.
If no producer is selected, no payment is made.

\begin{assumption}[Uniform exploration coverage]
\label{ass:procurement-coverage}
There exists a constant $c_{\mathrm{cov}}>0$ such that, with probability at
least $1-(Tng)^{-2}$,
\[
    N^0_{i,c}
    \ge
    c_{\mathrm{cov}}\frac{M}{ng}
    \qquad
    \text{for all }(i,c)\in[n]\times\C.
\]
This is a joint coverage condition on the exogenous context process and the
bid-independent exploration policy: during the exploration phase, each context
must occur sufficiently often, and conditional on each context, the exploration
policy must sample each producer sufficiently often. In particular, this
assumption is nonvacuous only when $M$ is at least of order $ng$ (and typically
$ng\log(Tng)$ for high-probability coverage).
\end{assumption}

A standard i.i.d.-context and uniform-exploration condition that implies
Assumption~\ref{ass:procurement-coverage} by Chernoff bounds is recorded in the
supplement.

\begin{theorem}[Explore-then-commit regret]
\label{thm:etc-procurement-regret}
Under the conditional reward assumption in the model,
Assumption~\ref{ass:procurement-coverage}, and truthful bidding, the
explore-then-commit procurement mechanism satisfies
\[
    \E\!\left[\Reg(T)\right]
    =
    \wt O\!\left(
        M
        +
        T\sqrt{\frac{ng}{M}}
    \right).
\]
With \(M\asymp(ng)^{1/3}T^{2/3}\), the bound becomes
\(\E[\Reg(T)]=\wt O((ng)^{1/3}T^{2/3})\). Thus, for fixed
\(n\) and \(g\), ETC has \(\wt O(T^{2/3})\) welfare regret.
\end{theorem}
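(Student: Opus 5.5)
We split the regret into the exploration phase and the commit phase. In rounds \(t\le M\), each round's regret term \(B_t-\1\{i_t\neq\empt\}(\mu_{i_t,c_t}-k_{i_t})\) is at most \(2\), since \(B_t\in[0,1]\) and \(\mu_{i,c}-k_i\ge -1\). This gives a contribution of \(O(M)\) deterministically. All remaining work concerns rounds \(t>M\), where the allocation is the empirical-score maximizer under truthful bids \(b=k\).

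\emph{Step 1 (good event).} Let \(\mathcal E\) be the intersection of two events: the coverage event of Assumption~\ref{ass:procurement-coverage}, and the concentration event
\[
|\wh\mu^0_{i,c}-\mu_{i,c}|\le \sqrt{\frac{2\log(2Tng)}{N^0_{i,c}}}
\quad\text{for all }(i,c).
\]
We control concentration by Azuma--Hoeffding for the martingale-difference sequence of centered observations for each fixed pair \((i,c)\). Because \(N^0_{i,c}\) is random, we apply the bound uniformly over all possible counts \(m\in\{1,\dots,M\}\), via a union bound over \(m\) and over the \(ng\) pairs. Applying it to the \(m\)-th observation of \((i,c)\) requires that the exploration rule is bid-independent and that the conditional-mean assumption holds given \(\mathcal F_{t-1}\). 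The failure probability is then \(O((Tng)^{-2}+(Tng)^{-1})\), so \(\mathcal E^c\) contributes at most \(2T\Prb(\mathcal E^c)=O(1)\) to expected regret. On \(\mathcal E\), every pair satisfies
\[
|\wh\mu^0_{i,c}-\mu_{i,c}|\le \varepsilon:=\sqrt{\frac{2ng\log(2Tng)}{c_{\mathrm{cov}}M}}.
\]

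\emph{Step 2 (per-round commit regret).} Fix \(t>M\) with context \(c\). Let \(a^\star\) be the efficient choice and \(\hat a=i_t\) the empirical choice; for the outside option, take both true and empirical scores to be \(0\). The standard plug-in argument gives
\[
S_{a^\star}-S_{\hat a}=(S_{a^\star}-\wh S_{a^\star})+(\wh S_{a^\star}-\wh S_{\hat a})+(\wh S_{\hat a}-S_{\hat a})\le 2\varepsilon,
\]
because the middle term is \(\le 0\) and each score error equals \(|\wh\mu^0-\mu|\le\varepsilon\), or \(0\) for \(\empt\). The regret term at round \(t\) is exactly \(S_{a^\star}(c;k)-S_{\hat a}(c;k)\) under the convention that \(\empt\) has surplus zero. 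Summing over at most \(T\) commit rounds gives
\[
2T\varepsilon=\wt O\!\left(T\sqrt{ng/M}\right).
\]
No assumption on the context process beyond exogeneity is needed, since the bound is uniform in \(c\).

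\emph{Step 3 (tuning).} Adding the pieces gives \(\E[\Reg(T)]=\wt O(M+T\sqrt{ng/M})\). Balancing \(M=T\sqrt{ng/M}\) yields \(M\asymp(ng)^{1/3}T^{2/3}\) and regret \(\wt O((ng)^{1/3}T^{2/3})\). If \(M=T\) from the \(\min\), the regret is \(O(T)\), which still lies within the stated bound in that regime.

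The main obstacle is Step 1's concentration with random, possibly context-process-dependent counts \(N^0_{i,c}\). The plan is to index observations of each pair by their arrival order and use the martingale (optional-skipping) structure: the \(m\)-th observation of \((i,c)\) has conditional mean \(\mu_{i,c}\) given the past. A union bound over \(m\le M\) then removes the dependence on the random count, at the cost of only a logarithmic factor.
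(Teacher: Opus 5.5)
Your proposal is correct and follows essentially the same route as the paper's proof. Both bound exploration regret by $2M$, use coverage plus martingale concentration to get a uniform error $\varepsilon=\wt O(\sqrt{ng/M})$, apply the plug-in argument for at most $2\varepsilon$ commit regret per round, and absorb the failure event into a negligible term. Your union bound over counts $m\le M$ simply makes explicit the step the paper invokes as ``the bounded martingale concentration inequality.''
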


\begin{theorem}[Truthfulness of explore-then-commit]
\label{thm:etc-procurement-truthfulness}
Suppose the exploration allocation rule and exploration payments are
bid-independent, the commit-phase allocation and payments are computed from the
frozen exploration estimates $\wh\mu^0$, and tie-breaking is fixed ex ante and
bid-independent. Then the explore-then-commit procurement mechanism is
dominant-strategy truthful. With the exploration payment set to $1$, it is also
ex-post individually rational.
\end{theorem}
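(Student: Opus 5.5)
The plan is to decompose producer $i$'s cumulative utility into an exploration-phase part and a commit-phase part, and to show that the report $b_i$ affects only the commit-phase part. There, conditional on the frozen estimates, each round is a copy of the full-information mechanism of Proposition~\ref{prop:oracle-dsic} with $\mu$ replaced by $\wh\mu^0$.

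\emph{Step 1: the exploration phase does not depend on $b_i$.} Fix $i$, $k_i$, $b_{-i}$ and any report $b_i'$. For rounds $t\le M$, the exploration allocation, the payment $1$, and the law of contexts are all bid-independent. Reward realizations depend only on the history and the selected arm. Hence, by induction on $t$, the joint law of the exploration history $(c_t,i_t,Y_t(i_t))_{t\le M}$ is the same under $b_i=k_i$ and under $b_i=b_i'$. I will make this a pathwise statement by coupling all randomness (contexts, reward noise, mechanism coins) on a common probability space, so that the exploration trajectory is identical under both reports. Consequently producer $i$'s exploration-phase utility, $\sum_{t\le M}\1\{i_t=i\}(1-k_i)$, is identical under both reports, and so is $\wh\mu^0$. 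Under the same coupling the commit-phase contexts $c_t$, $t>M$, are also identical, because the context process is action-independent.

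\emph{Step 2: per-round truthfulness in the commit phase.} Condition on $\wh\mu^0$ and on $c_t$. The commit rule is exactly the full-information efficient mechanism for the value matrix $\wh\mu^0$. It selects $i$ iff $\wh\mu^0_{i,c_t}-b_i$ beats $\wh H_i^0(c_t;b_{-i})$, up to the fixed tie-breaking. When $i$ is selected it pays $\wh q_i^0$, which does not depend on $b_i$.

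I will invoke Proposition~\ref{prop:oracle-dsic} with $\mu$ replaced by $\wh\mu^0$. Its proof uses only monotonicity of the score in $b_i$, bid-independent ties, and the clipped threshold form. Since $\wh\mu^0\in[0,1]$, the payment $\wh q_i^0$ lies in $[0,1]$ and the same argument applies verbatim. The per-round utility $\1\{i_t=i\}(\wh q_i^0-k_i)$ is therefore maximized at $b_i=k_i$ pointwise.

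The one subtlety is the clipping at $1$. If $\wh\mu^0_{i,c}-\wh H_i^0>1$, the producer is selected for every $b_i\in[0,1]$ and receives $1$, so its utility is report-invariant. Clipping at $0$ corresponds to never being selected when $b_i\ge0$ exceeds the threshold, up to ties at $b_i=0$, which are handled by the fixed tie rule. I will check these two edge cases explicitly.

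\emph{Step 3: summation and individual rationality.} Summing the pointwise inequalities over $t>M$, adding the report-invariant exploration utility, and taking expectations gives $\E[\mathcal U_i(k_i,b_{-i};k_i)]\ge\E[\mathcal U_i(b_i',b_{-i};k_i)]$ for every $b_{-i}$ and every $b_i'$, which is DSIC. For ex-post IR:
\begin{itemize}
\item exploration rounds give $1-k_i\ge0$ since $k_i\le1$;
\item under truthful bidding, commit rounds give $\wh q_i^0-k_i\ge0$, because selection implies $\wh\mu^0_{i,c}-k_i\ge\wh H_i^0$, and clipping to $[0,1]$ preserves this since $k_i\in[0,1]$.
\end{itemize}

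The main obstacle I expect is Step 1, namely rigorously establishing that $\wh\mu^0$ and the later contexts are report-invariant under an adaptive context process. I would handle this with the explicit coupling, drawing contexts and reward noise from report-independent primitives, rather than by arguing about distributions.
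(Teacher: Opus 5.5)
Your proposal is correct and follows essentially the same route as the paper. The paper likewise argues that the exploration phase is bid-independent, so $\wh\mu^0$ and the exploration utility are unaffected by the report. It then applies the three-case threshold argument of Proposition~\ref{prop:oracle-dsic} conditionally on the frozen estimates, and gets ex-post IR from the unit exploration payment and the clipped critical payment; your explicit coupling of contexts, reward noise, and coins is just a more careful rendering of the paper's ``condition on any realized exploration history'' step.
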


\section{Frozen-Payment UCB}
\label{sec:ucb-frozen-payments}

We retain UCB's adaptive allocation while computing payments only from
bid-independent data. Frozen-payment UCB freezes payment estimates before later
bid-dependent allocations can affect pricing evidence. We first recall the UCB
allocation rule used after exploration.

For each pair $(i,c)$, let $N_{i,c}(t)$ be the number of observations of
producer $i$ in context $c$ before round $t$, and let $\wh\mu_{i,c}(t)$ be the
corresponding empirical mean. We use the convention $\wh\mu_{i,c}(t)=0$ whenever
$N_{i,c}(t)=0$. Define
\[
    \beta_{i,c}(t)
    =
    \sqrt{
        \frac{C\log(2Tng)}
             {N_{i,c}(t)\vee1}
    },
\]
where $C\ge2$ is a sufficiently large universal constant. Unobserved pairs are
optimistic by construction, and on observed pairs optimism holds on the UCB
concentration event below. We state the results for known horizon $T$; a
standard doubling schedule removes this knowledge at an additional logarithmic
factor.

Given bids $b$, define the optimistic UCB scores by
$\wt S_i(c;b,t)=\wh\mu_{i,c}(t)+\beta_{i,c}(t)-b_i$ for $i\in[n]$ and
$\wt S_{\empt}(c;b,t)=0$. The UCB procurement allocation rule chooses
$i_t^{\UCB}(b)\in\arg\max_{a\in[n]\cup\{\empt\}}\wt S_a(c_t;b,t)$.
\sadie{Maybe define this as $i_t^{\text{UCB}}(b)$ since we use it in Lemma 1.}\shi{fixed}

\begin{proposition}[UCB regret]
\label{thm:ucb-procurement-regret}
Suppose the bounded conditional-mean assumption in the model holds for every
adaptive history and producer--context pair. Under truthful bidding \(b_i=k_i\),
the UCB procurement allocation rule satisfies
\[
    \E\!\left[\Reg(T)\right]
    \le
    4\sqrt{CngT\log(2Tng)}
    +O\!\left(\frac{1}{Tn^2g^2}\right).
\]
In particular, if $n$ and $g$ are constants, then
$\E[\Reg(T)]=\wt O(T^{1/2})$.
\end{proposition}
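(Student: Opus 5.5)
The plan is the standard optimism argument for contextual UCB, with the outside option treated as an arm of known score $0$. Work under truthful bids $b=k$.

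\emph{Concentration event.} Let $\mathcal G$ be the event that for every round $t$ and pair $(i,c)$ with $N_{i,c}(t)\ge1$,
\[
|\wh\mu_{i,c}(t)-\mu_{i,c}|\le\beta_{i,c}(t).
\]
Fix $(i,c)$ and index its observations by arrival order. The centered outcomes $Y_t(i)-\mu_{i,c}$, taken at rounds where $(i,c)$ is chosen, form a bounded martingale-difference sequence with respect to the filtration generated by $\mathcal F_{t-1}$ and $c_t$. This uses the conditional-mean assumption; the selection event is measurable before $Y_t$ is revealed.

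Apply Azuma--Hoeffding for each fixed count $s\in\{1,\dots,T\}$, or equivalently a skipping argument on the $s$-th observation. Then take a union bound over $s\le T$, $i\le n$, $c\le g$, and both tails. With $C\ge2$ large enough, this gives
\[
\Prb(\mathcal G^c)\le 2Tng\cdot(2Tng)^{-C/2}\cdot O(1).
\]
Choosing $C$ so that this is $O((Tng)^{-3})$, and noting the per-round regret is at most $1$ (since $B_t\le1$ and surplus is at least $-1$), the contribution of $\mathcal G^c$ is at most $T\cdot O((Tng)^{-3})=O(1/(Tn^2g^2))$. This matches the additive term.

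\emph{Per-round regret on $\mathcal G$.} Unobserved pairs have $\wh\mu=0$ and $\beta=\sqrt{C\log(2Tng)}\ge1\ge\mu_{i,c}$, so optimism $\wt S_i\ge\mu_{i,c}-k_i$ holds for every $i$ on $\mathcal G$. Since $\wt S_\empt=0$, we get $\max_a\wt S_a(c_t;k,t)\ge B_t$. Now split on the selected action.
\begin{itemize}
\item If $i_t=\empt$, then $0\ge\max_a \wt S_a\ge B_t$, so the regret is $0$.
\item If $i_t=i\in[n]$, then $B_t\le \wt S_i$. Hence
\[
B_t-(\mu_{i,c_t}-k_i)\le \wh\mu_{i,c_t}(t)-\mu_{i,c_t}+\beta_{i,c_t}(t)\le 2\beta_{i,c_t}(t)
\]
when the pair has been observed, and at most $1\le\beta_{i,c_t}(t)$ when it has not.
\end{itemize}
Either way the instantaneous regret is at most $2\beta_{i_t,c_t}(t)\1\{i_t\ne\empt\}$.

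\emph{Summation.} Group the rounds by pair $(i,c)$. For the $m$-th selection of that pair the count is $N=m-1$, so with $N\vee1$ the first two selections each contribute at most $2\sqrt{C\log(2Tng)}$ and later ones contribute $2\sqrt{C\log(2Tng)/(m-1)}$. Summing over $m\le N_{i,c}(T+1)$ gives at most $2\sqrt{C\log(2Tng)}\,(2\sqrt{N_{i,c}}+1)$ up to constants. Then Cauchy--Schwarz with $\sum_{i,c}N_{i,c}\le T$ gives
\[
\sum_{i,c}\sqrt{N_{i,c}}\le\sqrt{ngT},
\]
which yields the leading term $4\sqrt{CngT\log(2Tng)}$.

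\emph{Main obstacle.} The delicate step is bookkeeping rather than ideas. The $N\vee1$ convention produces additive terms of order $ng\sqrt{C\log(2Tng)}$. These must be absorbed into the stated constant $4$, which works because $ng\le\sqrt{ngT}$ whenever $ng\le T$ and the bound is trivial otherwise. I would tighten the sum by using $\sum_{m=1}^{N}1/\sqrt{m\vee1}\le 2\sqrt N$ together with the unobserved-pair bound of $1$ rather than $2\beta$.

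The other point needing care is the concentration step. A clean self-normalized statement under adaptive, context-dependent selection needs the optional-skipping / union-over-counts argument, and this is where the $\mathcal F_{t-1}$-conditional mean assumption is essential.
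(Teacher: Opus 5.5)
Your proposal is correct and follows essentially the same route as the paper's proof: a uniform martingale--Hoeffding event over pairs and sample counts, optimism for both observed and unobserved pairs, a per-round bound of $2\beta_{i_t,c_t}(t)$ with zero regret when $\empt$ is chosen, Cauchy--Schwarz over the $ng$ pairs, and an $O(T\,\Prb(\mathcal E^c))$ failure term. There are two harmless bookkeeping slips worth fixing. First, per-round regret can be $2$, not $1$ (e.g.\ $B_t=1$ while the selected surplus is $-1$), so for an unobserved pair use $B_t\le\wt S_i=\beta_{i,c_t}(t)-k_i$ to bound the regret by $\beta_{i,c_t}(t)$. Second, absorbing an additive $ng\sqrt{C\log(2Tng)}$ term would inflate the constant beyond $4$; instead note that $1+\sum_{j=1}^{N-1}j^{-1/2}\le 2\sqrt{N}$, so each pair contributes at most $4\sqrt{C\log(2Tng)\,N_{i,c}}$ and the constant $4$ comes out exactly, a detail the paper's own proof leaves implicit.
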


UCB gives the sharper welfare-regret benchmark, but a bid can affect early
allocations, later samples, estimates, bonuses, and allocations. Thus dynamic
truthfulness does not follow from per-round critical payments.

We analyze UCB with frozen payments: payment data and estimates are
bid-independent, though critical payments may depend on other producers' bids.
After $M$ bid-independent exploration rounds (paying $1$, say), the mechanism
freezes $\wh\mu^{\mathrm{pay}}$ for payments; later observations update allocation
estimates only. Allocation estimates may reuse exploration samples, but payment
estimates are never updated.

When producer $i$ is selected in context $c$, it receives the frozen critical
payment $\wh q_i(c;b_{-i})$, which depends on frozen estimates, context, and
other bids, but not on $i$'s own bid. Its bid can still affect selection and the
future UCB history, which is why F-UCB is only approximately truthful.

\subsection{Frozen Payment Rule and Regret}

Given frozen payment estimates $\wh\mu^{\mathrm{pay}}$, define
\[
\begin{aligned}
\wh H_i(c;b_{-i})
&=
\max\{0,\max_{j\neq i}(\wh\mu^{\mathrm{pay}}_{j,c}-b_j)\},\\
\wh q_i(c;b_{-i})
&=
[\wh\mu^{\mathrm{pay}}_{i,c}-\wh H_i(c;b_{-i})]_{[0,1]} .
\end{aligned}
\]
If producer $i$ is selected in context $c$, it is paid
$\wh q_i(c;b_{-i})$.

At round $t>M$, allocation uses the adaptive scores
\[
\wt S_i(c_t;b,t)=\wh\mu_{i,c_t}(t)+\beta_{i,c_t}(t)-b_i,
\qquad
\wt S_{\empt}(c_t;b,t)=0 .
\]
It then chooses
$i_t=i_t^{\UCB}(b)\in\arg\max_{a\in[n]\cup\{\empt\}}\wt S_a(c_t;b,t)$.
Later observations update allocation estimates only, never frozen payments;
the complete mechanism is Algorithm~\ref{alg:fucb}.

\begin{algorithm}[t]
\caption{Frozen-payment UCB procurement}
\label{alg:fucb}
\begin{algorithmic}[1]
\STATE Choose a bid-independent payment-exploration length \(M\).
\FOR{\(t=1,\ldots,M\)}
    \STATE Explore bid-independently; pay \(1\) and observe \(Y_t(i_t)\).
\ENDFOR
\STATE Freeze payment estimates and critical prices; initialize UCB estimates.
\FOR{\(t=M+1,\ldots,T\)}
    \STATE Choose the UCB-score maximizer, with outside score \(0\).
    \STATE If selected, pay \(\wh q_i(c_t;b_{-i})\), observe reward, and update UCB estimates.
\ENDFOR
\end{algorithmic}
\end{algorithm}

The main text uses \emph{near-UCB tuning},
$M\asymp\sqrt{ngT}$ (capped at $T$), which preserves the UCB regret order;
the supplement gives the balanced tuning and derivations.

\begin{theorem}[Regret of frozen-payment UCB]
\label{thm:fucb-regret}
Under truthful bidding and the bounded conditional-mean assumption, the
frozen-payment UCB mechanism with any $M$-round bid-independent exploration
phase, whose allocation estimates are initialized with the exploration
observations, satisfies
\[
\E\!\left[\Reg(T)\right]
=
\wt O\!\left(M+\sqrt{ngT}\right).
\]
In particular, for fixed $n$ and $g$, the near-UCB tuning achieves
$\E[\Reg(T)]=\wt O(T^{1/2})$. The balanced specialization is in the
supplement.
\end{theorem}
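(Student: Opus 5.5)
We split the regret into the exploration phase and the UCB phase, writing $\Reg(T)=\sum_{t\le M}r_t+\sum_{t>M}r_t$ with per-round regret $r_t=B_t-\1\{i_t\neq\empt\}(\mu_{i_t,c_t}-k_{i_t})$.

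\textbf{Exploration phase.} Since $\mu_{i,c}\in[0,1]$ and $k_i\in[0,1]$, each $r_t$ lies in $[0,2]$ (indeed $B_t\le 1$ and the selected surplus is at least $-1$). Hence the first $M$ rounds contribute at most $2M$ deterministically. The bid-independent payments of $1$ are transfers, so they do not enter welfare regret.

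\textbf{UCB phase.} Payments are frozen and never affect allocation, so the allocation rule from round $M+1$ onward is exactly the UCB procurement rule of Proposition~\ref{thm:ucb-procurement-regret}, run from a warm start where the counts $N_{i,c}(M+1)$ and means include the exploration samples.

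We define the concentration event $\mathcal E$ on which $|\wh\mu_{i,c}(t)-\mu_{i,c}|\le\beta_{i,c}(t)$ for all $t,i,c$ with $N_{i,c}(t)\ge1$. The exploration observations also satisfy the conditional-mean assumption, so they form part of the same martingale-difference sequence. Azuma--Hoeffding plus a union bound over $(i,c)$ and sample counts $\le T$ therefore gives $\Prb(\mathcal E^c)=O((Tng)^{-2})$. Off $\mathcal E$ we bound regret by $2T$, contributing $O(1/(Tn^2g^2))$.

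On $\mathcal E$ we use the standard optimism argument, with truthful bids $b=k$:
\begin{itemize}
\item The optimal action $a^\star$ (possibly $\empt$, score $0$) has UCB score at least its true surplus.
\item The chosen $i_t$ maximizes the UCB score.
\item Together these give $r_t\le \wt S_{i_t}(c_t;k,t)-(\mu_{i_t,c_t}-k_{i_t})\le 2\beta_{i_t,c_t}(t)$ when $i_t\ne\empt$, which is valid whether or not the pair was previously observed, since $\mu\le1\le\beta$ when $N=0$ and $C\ge2$.
\item If $i_t=\empt$, then all UCB scores are $\le0$, so $B_t=0$ and $r_t=0$.
\end{itemize}

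\textbf{Summing the bonuses.} We sum $2\beta_{i_t,c_t}(t)$ over $t>M$, grouped by pair. For each $(i,c)$ the counts increase by one each time the pair is selected, starting from $N^0_{i,c}\ge0$. The warm start only shifts the counts upward, so
\[
\sum_{t>M}\beta_{i_t,c_t}(t)\le\sum_{i,c}\sum_{m=0}^{T_{i,c}-1}\sqrt{\tfrac{C\log(2Tng)}{m\vee1}}\le \sum_{i,c}\sqrt{C\log(2Tng)}\,(1+2\sqrt{T_{i,c}}),
\]
where $T_{i,c}$ is the number of post-exploration selections of $(i,c)$. Cauchy--Schwarz with $\sum T_{i,c}\le T$ then gives $O(\sqrt{ngT\log(Tng)}+ng\sqrt{\log})$, which is $\wt O(\sqrt{ngT})$ since $ng\le\sqrt{ngT}$ whenever $ng\le T$; otherwise the trivial bound $2T\le2\sqrt{ngT}$ applies. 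This is essentially the bound of Proposition~\ref{thm:ucb-procurement-regret}, which we may invoke directly after the monotonicity remark about warm-start counts.

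\textbf{Combining.} Adding the two phases gives $\E[\Reg(T)]=\wt O(M+\sqrt{ngT})$, and taking $M\asymp\sqrt{ngT}$ yields $\wt O(T^{1/2})$ for fixed $n,g$. The only delicate step is verifying that concentration and the elliptical-potential-style sum remain valid with exploration samples reused. This hinges on the filtration-level conditional-mean assumption covering the bid-independent exploration rounds, and on the observation that larger initial counts can only decrease the bonus sum.
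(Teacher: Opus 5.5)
Your proposal is correct and follows the paper's proof. The paper bounds the $M$ exploration rounds by $2M$, notes that the post-exploration allocation is the UCB rule of Proposition~\ref{thm:ucb-procurement-regret} with warm-started counts (which can only shrink the confidence radii), and repeats that proposition's optimism-plus-counting argument, with a negligible concentration-failure term. You make explicit the details the paper leaves implicit: concentration covering the reused exploration samples, the $N=0$ case, and the Cauchy--Schwarz summation of the bonuses.
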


\begin{assumption}[Payment exploration coverage]
\label{ass:payment-regularity}
The bid-independent exploration phase used to compute the frozen payments
satisfies the uniform coverage condition in
Assumption~\ref{ass:procurement-coverage}.
\end{assumption}

\begin{lemma}[Frozen critical payments are accurate]
\label{lem:frozen-critical-payment-accuracy}
Under Assumption~\ref{ass:payment-regularity}, with probability at least
$1-2(Tng)^{-2}$,
$\varepsilon_M:=\max_{i,c}|\wh\mu^{\mathrm{pay}}_{i,c}-\mu_{i,c}|
=\wt O(\sqrt{ng/M})$.
On this event, for every producer $i$, context $c$, and other bids $b_{-i}$,
\[
    \left|
        \wh q_i(c;b_{-i})-q_i^\star(c;b_{-i})
    \right|
    \le
    2\varepsilon_M
    =
    \wt O\!\left(
        \sqrt{\frac{ng}{M}}
    \right).
\]
\end{lemma}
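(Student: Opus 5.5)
The proof has two parts: a concentration step giving the uniform bound on $\varepsilon_M$, and a deterministic Lipschitz step that transfers this bound to the critical payments.

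\emph{Concentration.} Work on the coverage event of Assumption~\ref{ass:payment-regularity}, which holds with probability at least $1-(Tng)^{-2}$. On this event $N^0_{i,c}\ge c_{\mathrm{cov}}M/(ng)$ for every pair $(i,c)$. For a fixed pair, the centered observations $Y_t(i)-\mu_{i,c}$ collected at rounds where $c_t=c$ and $i_t=i$ form a bounded martingale-difference sequence; this follows from the conditional-mean assumption in the model. The sample count $N^0_{i,c}$ is random, so Azuma--Hoeffding cannot be applied with it directly. Two standard routes handle this: apply Azuma--Hoeffding to the first $m$ samples of the pair for each fixed $m\le M$ and take a union bound over $m$, or use a self-normalized/optional-skipping argument in which the $s$-th sample of $(i,c)$ is treated as a martingale difference with respect to the stopped filtration. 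Either way we obtain
\[
|\wh\mu^{\mathrm{pay}}_{i,c}-\mu_{i,c}|\le\sqrt{2\log(2(Tng)^{3})/N^0_{i,c}} .
\]
This holds simultaneously for all $(i,c)$ and all possible counts with failure probability at most $(Tng)^{-2}$, after a union bound over the $ng$ pairs and $M\le T$ counts. Substituting the coverage lower bound gives $\varepsilon_M\le\sqrt{2ng\log(2(Tng)^3)/(c_{\mathrm{cov}}M)}=\wt O(\sqrt{ng/M})$. A final union bound with the coverage failure yields total failure probability at most $2(Tng)^{-2}$.

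\emph{Deterministic transfer.} On this event, fix $i$, $c$ and $b_{-i}$. The threshold map $x\mapsto\max\{0,\max_{j\neq i}(x_j-b_j)\}$ is $1$-Lipschitz in the sup norm, because $\max$ and translation are both nonexpansive. Hence
\[
|\wh H_i(c;b_{-i})-H_i(c;b_{-i})|\le\max_{j\ne i}|\wh\mu^{\mathrm{pay}}_{j,c}-\mu_{j,c}|\le\varepsilon_M .
\]
The triangle inequality then gives
\[
|(\wh\mu^{\mathrm{pay}}_{i,c}-\wh H_i)-(\mu_{i,c}-H_i)|\le 2\varepsilon_M .
\]
The clipping map $[\cdot]_{[0,1]}$ is $1$-Lipschitz, so $|\wh q_i-q_i^\star|\le2\varepsilon_M$. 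This bound holds uniformly over $b_{-i}$, since the event does not depend on bids.

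\emph{Main obstacle.} The only delicate step is the concentration with a random sample count $N^0_{i,c}$ under adaptive (but bid-independent) exploration and non-i.i.d.\ contexts. The union over fixed sample indices $m$ via the stopped filtration is the cleanest fix, and it costs only a logarithmic factor absorbed into $\wt O$.
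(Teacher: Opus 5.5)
Your proposal is correct and follows the paper's proof. You use the coverage event plus martingale concentration to get $\varepsilon_M=\wt O(\sqrt{ng/M})$ with total failure probability $2(Tng)^{-2}$, and then the $1$-Lipschitz property of the threshold maximum and of clipping to obtain the $2\varepsilon_M$ payment bound. The paper's proof only cites ``the bounded martingale concentration inequality'' for the first step, so your union bound over fixed sample counts, which handles the random and adaptively determined $N^0_{i,c}$, fills in a detail the paper leaves implicit.
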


\subsection{Truthful-Path Allocation Stability}

We use the following stability condition.

\begin{assumption}[Truthful-path margin]
\label{ass:truthful-path-margin}
Fix a producer $i$ and other bids $b_{-i}$. Let $b^0=(k_i,b_{-i})$ be the bid
profile in which producer $i$ bids truthfully. There exists a
possibly horizon-dependent margin $\rho_T>0$ such that for every context
$c\in\C$, the full-information optimal action
$a^\star(c;b^0)\in\arg\max_{a\in[n]\cup\{\empt\}}S_a(c;b^0)$ is unique and
satisfies
\[
    S_{a^\star(c;b^0)}(c;b^0)
    -
    \max_{a\neq a^\star(c;b^0)}
    S_a(c;b^0)
    \ge
    \rho_T .
\]
\end{assumption}

\begin{remark}[When the truthful-path margin holds]
\label{rem:when-margin-holds}
In smoothed or continuous models, small gaps are rare. For example, suppose
that, for every fixed context and every pair of distinct actions
$a,a'\in[n]\cup\{\empt\}$, the score difference
$S_a(c;b^0)-S_{a'}(c;b^0)$ induced by random costs and values has density at
most $B$ in a neighborhood of zero. Then
\[
    \Prb\left\{
        |S_a(c;b^0)-S_{a'}(c;b^0)|\le\rho
    \right\}
    \le 2B\rho .
\]
A union bound over contexts and action pairs gives probability
$O(Bg(n+1)^2\rho)$ that some pairwise gap is at most $\rho$. For the near-UCB
tuning, define
\[
    \rho_{T,\mathrm{near}}^\star
    =
    C_{\mathrm{gap}}\sqrt{ng\log(2Tng)}\,T^{-3/8}.
\]
This scale makes the truthful-path mismatch term no larger than the
payment-estimation term, with exceptional probability
$O(Bg(n+1)^2\rho_{T,\mathrm{near}}^\star)$. When $n$ or $g$ grows, the displayed
dependence should be kept explicitly. The analogous balanced scale is in the
supplement.
\end{remark}

\begin{lemma}[Truthful UCB allocation stability]
\label{lem:truthful-ucb-stability}
Suppose Assumption~\ref{ass:truthful-path-margin} holds. On the UCB
concentration event
$\mathcal E=\{\forall i,c,t:
|\wh\mu_{i,c}(t)-\mu_{i,c}|\le\beta_{i,c}(t)\}$, the truthful UCB path satisfies
\[
    \sum_{t=M+1}^T
    \1\left\{
        i_t^{\UCB}(b^0)
        \neq
        a^\star(c_t;b^0)
    \right\}
    \le
    \wt O\!\left(\frac{ng}{\rho_T^2}\right).
\]
More generally, for any $\alpha\in(0,1/2]$, if
$\rho_T\ge C_{\mathrm{gap}}\sqrt{ng\log(2Tng)}\,T^{-\alpha}$ for a sufficiently
large constant $C_{\mathrm{gap}}$, then the number of post-exploration
allocation mistakes is at most $O(T^{2\alpha})$.
\end{lemma}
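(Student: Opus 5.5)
On the event \(\mathcal E\), the plan is to show that any round in which the truthful UCB path deviates from \(a^\star(c_t;b^0)\) must involve an under-sampled producer--context pair. Then I count such rounds pair by pair, in the style of the gap-dependent UCB analysis.

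\emph{Step 1: a mistake forces a large bonus.} Fix a round \(t>M\) with \(c_t=c\), \(a^\star=a^\star(c;b^0)\), and \(i_t=i_t^{\UCB}(b^0)\neq a^\star\). On \(\mathcal E\), every UCB score is optimistic: \(\wt S_a(c;b^0,t)\ge S_a(c;b^0)\) for \(a\in[n]\). For unobserved pairs this holds because \(\beta\ge\sqrt{C\log(2Tng)}\ge1\). The outside option has score \(0\) exactly. Since \(i_t\) maximizes the UCB score,
\[
S_{a^\star}(c;b^0)\le \wt S_{a^\star}(c;b^0,t)\le \wt S_{i_t}(c;b^0,t)\le S_{i_t}(c;b^0)+2\beta_{i_t,c}(t).
\]
The last inequality uses \(\wh\mu_{i_t,c}(t)\le\mu_{i_t,c}+\beta_{i_t,c}(t)\) on \(\mathcal E\). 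Note that \(i_t\neq\empt\), because if \(i_t=\empt\) then \(0\ge\wt S_{a^\star}\ge S_{a^\star}\), which contradicts the margin. By Assumption~\ref{ass:truthful-path-margin}, \(S_{a^\star}-S_{i_t}\ge\rho_T\). Hence \(2\beta_{i_t,c}(t)\ge\rho_T\), that is,
\[
N_{i_t,c}(t)\le \frac{4C\log(2Tng)}{\rho_T^2}.
\]

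\emph{Step 2: counting.} Each mistaken round selects some producer \(i\neq\empt\) in context \(c\) and increments \(N_{i,c}\) by one. The mistake occurs only while \(N_{i,c}(t)\le 4C\log(2Tng)/\rho_T^2\), so each pair \((i,c)\) can account for at most \(4C\log(2Tng)/\rho_T^2+1\) mistakes. The count \(N_{i,c}\) already includes exploration samples, which only helps. Summing over the \(ng\) pairs gives
\[
\sum_{t>M}\1\{i_t^{\UCB}(b^0)\neq a^\star(c_t;b^0)\}\le ng\Bigl(\frac{4C\log(2Tng)}{\rho_T^2}+1\Bigr)=\wt O\!\left(\frac{ng}{\rho_T^2}\right).
\]

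\emph{Step 3: the specialization.} Substituting \(\rho_T\ge C_{\mathrm{gap}}\sqrt{ng\log(2Tng)}\,T^{-\alpha}\) makes the main term at most \(4C\,T^{2\alpha}/C_{\mathrm{gap}}^2\). The additive \(ng\) is bounded using the fact that the first bound is vacuous unless \(\rho_T\le1\) (scores lie in \([-1,1]\), so gaps are at most \(2\)). That constraint forces \(ng\log(2Tng)\le T^{2\alpha}/C_{\mathrm{gap}}^2\), so \(ng=O(T^{2\alpha})\). The total is therefore \(O(T^{2\alpha})\).

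The only delicate point is Step 1's handling of unobserved pairs and of the outside option. Optimism there must follow from the convention \(\wh\mu=0\) and \(C\ge2\), and this is why I check explicitly that \(\beta\ge1\) when \(N=0\). Everything else is the standard gap-dependent UCB argument, applied with the uniform margin \(\rho_T\) in place of per-arm gaps.
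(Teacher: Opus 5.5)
Your proof is correct and is essentially the paper's argument: on $\mathcal E$, every post-exploration mistake forces $\beta_{i_t,c_t}(t)\ge\rho_T/2$, hence $N_{i_t,c_t}(t)=O(\log(2Tng)/\rho_T^2)$, and counting the unit increments of $N_{i,c}$ over the $ng$ pairs gives the bound. Your single inequality chain merges the paper's two cases ($a^\star\in[n]$ and $a^\star=\empt$), and your explicit handling of the additive $ng$ term in the $T^{2\alpha}$ specialization fills in a step the paper leaves implicit; there the margin is in fact $\rho_T\le1$ rather than $2$, since the outside option always competes, but either constant works.
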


\subsection{Approximate Truthfulness}

Let $H_M$ be the bid-independent exploration history used to freeze payments.
For the comparison below, we also consider the producer's payoff after
observing $H_M$ and choosing a single post-exploration report. The action
$\bot$ denotes withdrawal from the post-exploration phase and yields zero
post-exploration utility. Equivalently, its post-exploration selection
indicator is fixed to zero.

For fixed $i$, $k_i$, and $b_{-i}$, write
\[
\bar U_i(b;H_M)
:=
\E\!\left[
\mathcal U_i^{\FUCB}(b,b_{-i};k_i)
\mid H_M
\right],
\]
Here, $b$ is used only after $H_M$; the bid-independent exploration history and
its utility are held fixed, so this comparison also covers fixed upfront reports.
We then define
\[
\Delta_i(b_i';H_M)
:=
\bar U_i(b_i';H_M)-\bar U_i(k_i;H_M).
\]
\[
\epsilon_{i,T}^{\mathrm{post}}(k_i;b_{-i})
:=
\E_{H_M}\!\left[
\sup_{b_i'\in[0,1]\cup\{\bot\}}
\left(\Delta_i(b_i';H_M)\right)_+
\right].
\]
The bid-independent exploration utility cancels in this comparison. This
auxiliary benchmark treats $H_M$ as observed by the producer.

For a realized $H_M$, write
\[
\eta(H_M):=\max_c|\wh q_i(c;b_{-i})-q_i^\star(c;b_{-i})|
\]
and let
\[
Z(H_M):=\E\!\left[
\sum_{t=M+1}^T
\1\!\left\{
\begin{gathered}
i_t^{\UCB}(k_i,b_{-i})\\
\neq a^\star(c_t;(k_i,b_{-i}))
\end{gathered}
\right\}
\middle| H_M
\right].
\]

\sadie{I think it would be helpful to add some sentence to highlight the intuition behind this bound: we are unifying the payment error (Lemma 1, $\eta(H_M)$) and truthful allocation mistake (Lemma 2, $Z(H_M)$). }\shi{fixed}
The proof compares both post-exploration paths with the same full-information
truthful benchmark. Conditional on $H_M$, a deviation has utility at most the
benchmark plus $(T-M)\eta(H_M)$, whereas truthful UCB has utility at least the
benchmark less by $(T-M)\eta(H_M)+Z(H_M)$. Thus payment error affects both
paths, while allocation mistakes affect only the truthful path; subtracting
the two comparisons gives the next lemma.
\begin{lemma}[Report-uniform comparison]
\label{lem:report-uniform-comparison}
For every realized exploration history,
\[
\sup_{b_i'\in[0,1]\cup\{\bot\}}
\left(\Delta_i(b_i';H_M)\right)_+
\le 2(T-M)\eta(H_M)+Z(H_M).
\]
\end{lemma}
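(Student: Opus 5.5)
The plan is to compare both post-exploration paths, truthful and deviating, with a common full-information benchmark. Conditional on $H_M$, fix $i$, $k_i$, $b_{-i}$ and write $b^0=(k_i,b_{-i})$. For $t>M$ define the benchmark per-round utility
\[
u^\star(c)=\1\{a^\star(c;b^0)=i\}\,\bigl(q_i^\star(c;b_{-i})-k_i\bigr)\ge 0 .
\]
Truthfulness of the full-information critical-price mechanism (Proposition~\ref{prop:oracle-dsic}) gives the pointwise fact
\[
u^\star(c)=\max\{0,\,q_i^\star(c;b_{-i})-k_i\}.
\]
To check it, selection of $i$ under $b^0$ happens exactly when $\mu_{i,c}-k_i\ge H_i(c;b_{-i})$, that is, when $k_i\le q_i^\star$ up to ties and clipping. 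The clipping to $[0,1]$ is harmless because $k_i\in[0,1]$. Ties are excluded by Assumption~\ref{ass:truthful-path-margin}, or in any case contribute zero.

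\emph{Upper bound for a deviation.} Under any report $b_i'$, or under $\bot$, producer $i$'s post-exploration realized utility is $\sum_{t>M}\1\{i_t=i\}(\wh q_i(c_t;b_{-i})-k_i)$. The key point is that the payment $\wh q_i$ does not depend on $i$'s own bid and is frozen given $H_M$. Each summand is therefore at most
\[
\max\{0,\wh q_i(c_t)-k_i\}\le \max\{0,q_i^\star(c_t)-k_i\}+\eta(H_M)=u^\star(c_t)+\eta(H_M),
\]
whatever the selection indicator is. This holds regardless of how the bid changes the UCB history, so no control of the deviating path is needed. Taking the conditional expectation, using that the context law is unaffected by reports, gives
\[
\bar U_i(b_i';H_M)-U_{\rm expl}\le \E\Bigl[\sum_{t>M}u^\star(c_t)\Bigm|H_M\Bigr]+(T-M)\eta(H_M),
\]
where $U_{\rm expl}$ is the exploration utility common to both paths. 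For $\bot$ the left-hand side is $0\le$ the right-hand side trivially.

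\emph{Lower bound for truthful UCB.} On rounds where $i_t^{\UCB}(b^0)=a^\star(c_t;b^0)$, the truthful summand equals $\1\{a^\star=i\}(\wh q_i-k_i)\ge u^\star(c_t)-\eta(H_M)$. On mismatch rounds, the summand is at least $-\eta(H_M)$ when $i_t=i$, because $\wh q_i-k_i\ge q_i^\star-k_i-\eta$. Here I need $q_i^\star-k_i\ge 0$ whenever $i$ is selected by UCB, which is not automatic. Instead I will bound each mismatch round crudely: the loss relative to $u^\star(c_t)-\eta$ is at most $1$, since $u^\star\le 1$ and $\wh q_i-k_i\ge -1$. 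More precisely, on a mismatch round the summand minus $(u^\star-\eta)$ is at least $-1$ up to an $\eta$ adjustment. Hence
\[
\bar U_i(k_i;H_M)-U_{\rm expl}\ge \E\Bigl[\sum_{t>M}u^\star(c_t)\Bigm|H_M\Bigr]-(T-M)\eta(H_M)-Z(H_M).
\]
This uses that per-round utilities lie in $[-1,1]$ and that the per-mismatch loss is at most one unit. Subtracting the two displays gives $\Delta_i(b_i';H_M)\le 2(T-M)\eta(H_M)+Z(H_M)$ for every $b_i'\in[0,1]\cup\{\bot\}$. The right-hand side is nonnegative, so the positive part and the supremum follow.

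\emph{Main obstacle.} The delicate step is the mismatch accounting: showing the truthful path loses at most one unit per mismatch relative to $u^\star-\eta$, without an extra $\eta$ or factor $2$. I would handle it case by case. If $a^\star=i$ but UCB picks another action, the summand is $0$ and the loss is $u^\star\le q_i^\star-k_i\le 1$. If UCB picks $i$ while $a^\star\ne i$, then $u^\star=0$ and the summand is $\wh q_i-k_i\ge -k_i\ge -1$, since $\wh q_i\ge 0$ by clipping. This gives the clean per-mismatch bound of $1$, with the $-\eta$ term absorbed because $\eta\ge0$. A secondary point is to state that conditioning on $H_M$ freezes the payments and leaves the exogenous context law unchanged, which is what allows the common benchmark term to cancel exactly.
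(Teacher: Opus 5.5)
Your proposal is correct and follows the paper's argument. Both compare the deviating and truthful post-exploration paths with the common benchmark $(q_i^\star(c_t;b_{-i})-k_i)^+$: any report's per-round utility is at most the benchmark plus $\eta(H_M)$ regardless of selection, the truthful path is at least the benchmark minus $\eta(H_M)$ minus one unit per mismatch, and exogeneity of contexts makes the benchmark term cancel. The only cosmetic difference is that the paper charges mismatches through $i$'s selection indicator and then bounds that by the action mismatch in $Z(H_M)$. This automatically covers the case you left implicit, where the two actions differ but neither is $i$ and the loss is zero.
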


\begin{theorem}[Approximate truthfulness from truthful-path stability]
\label{thm:ucb-direct-approx-truth}
Fix producer $i$ and other bids $b_{-i}$. Let $b^0=(k_i,b_{-i})$. Suppose the
frozen-payment UCB mechanism uses an exploration phase of length $M$ whose
allocations and payments are bid-independent, and the frozen payment estimates
are computed only from this exploration data. Under
Assumptions~\ref{ass:payment-regularity} and
\ref{ass:truthful-path-margin}, for every deviation $b_i'\in[0,1]$,
\[
\begin{aligned}
    &\E\!\left[
        \mathcal U_i^{\FUCB}(b_i',b_{-i};k_i)
        -
        \mathcal U_i^{\FUCB}(k_i,b_{-i};k_i)
    \right]\\
    &\qquad\le
    \wt O\!\left(
        \frac{ng}{\rho_T^2}
        +
        T\sqrt{\frac{ng}{M}}
    \right).
\end{aligned}
\]
Equivalently, F-UCB is \(\epsilon_T\)-approximately truthful at
\((i,k_i,b_{-i})\) in the sense of Definition~\ref{def:approx-truth}, where
\[
\epsilon_T
=
\wt O\!\left(
\frac{ng}{\rho_T^2}
+
T\sqrt{\frac{ng}{M}}
\right).
\]
Consequently, the average per-round incentive to deviate is
\[
\begin{aligned}
    &\frac{1}{T}
    \E\!\left[
        \mathcal U_i^{\FUCB}(b_i',b_{-i};k_i)
        -
        \mathcal U_i^{\FUCB}(k_i,b_{-i};k_i)
    \right]\\
    &\qquad\le
    \wt O\!\left(
        \frac{ng}{T\rho_T^2}
        +
        \sqrt{\frac{ng}{M}}
    \right).
\end{aligned}
\]
For fixed $n$ and $g$, the near-UCB tuning gives
\(\epsilon_T=\wt O(T^{3/4})\) and average incentive \(\wt O(T^{-1/4})\) under
\(\rho_T\ge\wt\Omega(T^{-3/8})\). The balanced incentive specialization is in
the supplement.
\end{theorem}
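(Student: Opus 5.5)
The plan is to condition on the bid-independent exploration history $H_M$, apply Lemma~\ref{lem:report-uniform-comparison} pathwise, and then take expectations over $H_M$. This uses Lemma~\ref{lem:frozen-critical-payment-accuracy} to control $\eta(H_M)$ and Lemma~\ref{lem:truthful-ucb-stability} to control $Z(H_M)$.

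\emph{Step 1: reduce to the post-exploration comparison.} Exploration allocations and payments are bid-independent, and the context law does not depend on reports. Hence the exploration-phase utility of producer $i$ has the same distribution under $k_i$ and under $b_i'$, and $H_M$ has the same law under both reports. By the tower property,
\[
\E\!\left[\mathcal U_i^{\FUCB}(b_i',b_{-i};k_i)-\mathcal U_i^{\FUCB}(k_i,b_{-i};k_i)\right]
=\E_{H_M}\!\left[\Delta_i(b_i';H_M)\right]
\le \epsilon^{\mathrm{post}}_{i,T}(k_i;b_{-i}).
\]
A fixed upfront report is a special case of the post-$H_M$ report considered in Lemma~\ref{lem:report-uniform-comparison}, so the inequality is valid. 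That lemma then gives $\epsilon^{\mathrm{post}}_{i,T}\le 2(T-M)\E[\eta(H_M)]+\E[Z(H_M)]$.

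\emph{Step 2: the payment term.} Let $G$ be the event of Lemma~\ref{lem:frozen-critical-payment-accuracy}, which has probability at least $1-2(Tng)^{-2}$. On $G$ we have $\eta(H_M)\le 2\varepsilon_M=\wt O(\sqrt{ng/M})$. Off $G$, $\eta\le 1$ because payments are clipped to $[0,1]$. So $2(T-M)\E[\eta]\le \wt O(T\sqrt{ng/M})+O(T(Tng)^{-2})$, and the second piece is negligible.

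\emph{Step 3: the allocation-mistake term.} Write $Z(H_M)$ as a conditional expectation of the mistake count. Split $\E[Z(H_M)]=\E[\text{count}]$ on the UCB concentration event $\mathcal E$ and its complement. On $\mathcal E$, Lemma~\ref{lem:truthful-ucb-stability} applies, because Assumption~\ref{ass:truthful-path-margin} is exactly its hypothesis for $b^0=(k_i,b_{-i})$, and it bounds the count by $\wt O(ng/\rho_T^2)$. Off $\mathcal E$ the count is at most $T$. The standard union bound over $(i,c,t)$ with Azuma--Hoeffding, using the martingale-difference assumption and $C\ge2$, gives $\Prb(\mathcal E^c)\le O((Tng)^{-1})$ or better, so this part contributes $O(1)$.

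\emph{Step 4: assemble the bounds.} Summing Steps 2 and 3 gives $\epsilon_T=\wt O(ng/\rho_T^2+T\sqrt{ng/M})$. Dividing by $T$ gives the per-round statement. For the near-UCB tuning, $M\asymp\sqrt{ngT}$ gives $T\sqrt{ng/M}=\wt O(T^{3/4})$ for fixed $n,g$. With $\rho_T\ge\wt\Omega(T^{-3/8})$, Lemma~\ref{lem:truthful-ucb-stability} with $\alpha=3/8$ bounds the mistake count by $O(T^{3/4})$, so $\epsilon_T=\wt O(T^{3/4})$ and the average incentive is $\wt O(T^{-1/4})$.

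The main obstacle is Step 1. It requires justifying that the deviation and truthful paths can be compared under a common $H_M$, with identical exploration utilities, which rests on bid-independence of both exploration and context laws. It also requires checking that Lemma~\ref{lem:report-uniform-comparison}'s bound, which is stated for the post-exploration report class, dominates the fixed-report deviation in Definition~\ref{def:approx-truth}. A secondary subtlety is that $Z(H_M)$ concerns only the truthful path, so $\mathcal E$ need only be controlled along that path. The deviation path is handled entirely inside Lemma~\ref{lem:report-uniform-comparison}.
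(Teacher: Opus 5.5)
Your proposal is correct and follows the paper's proof essentially step for step. You apply Lemma~\ref{lem:report-uniform-comparison} for each realized $H_M$, bound $\E[\eta(H_M)]$ via Lemma~\ref{lem:frozen-critical-payment-accuracy} together with the clipping bound $\eta\le 1$ off the good event, and bound $\E[Z(H_M)]$ via Lemma~\ref{lem:truthful-ucb-stability} on $\mathcal E$ plus the trivial bound $T$ on its complement; your Step~1 only spells out the tower-property and bid-independence reduction that the paper leaves implicit, and your weaker failure probability $O((Tng)^{-1})$ still suffices.
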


\begin{corollary}[Smoothed-instance approximate truthfulness]
\label{cor:smoothed-approx-truth}
Under the smoothed density condition in
Remark~\ref{rem:when-margin-holds} with bound \(B\), and
Assumption~\ref{ass:payment-regularity}, F-UCB with the near-UCB tuning is
\(\wt O(T^{3/4})\)-approximately truthful, with average per-round incentive
\(\wt O(T^{-1/4})\), at \((i,k_i,b_{-i})\) for fixed \(n,g\), with instance
probability at least
\[
    1-O\!\left(Bg(n+1)^2\rho_{T,\mathrm{near}}^\star\right).
\]
Conditional on the realized instance, utility expectations are over contexts,
reward noise, and mechanism randomness. The balanced smoothed specialization is
in the supplement.
\end{corollary}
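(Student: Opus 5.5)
The corollary follows by combining Remark~\ref{rem:when-margin-holds} with Theorem~\ref{thm:ucb-direct-approx-truth} at the near-UCB tuning. The work is to check that the margin scale $\rho^\star_{T,\mathrm{near}}$ makes both error terms in the theorem $\wt O(T^{3/4})$, and that the instance-level event on which the margin holds has the stated probability.

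\emph{Step 1: the margin holds with high probability.} Fix $i$ and $b_{-i}$, and treat the instance (values and costs) as random under the smoothed density condition. Set $\rho=\rho^\star_{T,\mathrm{near}}=C_{\mathrm{gap}}\sqrt{ng\log(2Tng)}\,T^{-3/8}$. By Remark~\ref{rem:when-margin-holds}, each pairwise score gap $|S_a(c;b^0)-S_{a'}(c;b^0)|$ is at most $\rho$ with probability at most $2B\rho$. A union bound over the $g$ contexts and at most $(n+1)^2$ ordered action pairs shows that, except with probability $O(Bg(n+1)^2\rho)$, every pairwise gap in every context exceeds $\rho$. On this good event the maximizer $a^\star(c;b^0)$ is unique, and its gap to the runner-up is at least $\rho$. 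So Assumption~\ref{ass:truthful-path-margin} holds with $\rho_T=\rho^\star_{T,\mathrm{near}}$.

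\emph{Step 2: apply the theorem conditional on the instance.} On the good event the instance is fixed. Utility expectations are then over contexts, noise, and mechanism randomness only, as the corollary specifies. Assumption~\ref{ass:payment-regularity} is assumed directly, so Theorem~\ref{thm:ucb-direct-approx-truth} applies and gives
\[
\epsilon_T=\wt O\!\left(\frac{ng}{\rho_T^2}+T\sqrt{\frac{ng}{M}}\right).
\]
Substituting $\rho_T^2\asymp ng\log(2Tng)\,T^{-3/4}$ makes the first term $\wt O(T^{3/4})$. With the near-UCB choice $M\asymp\sqrt{ngT}$, the second term is $T\,(ng)^{1/4}T^{-1/4}=\wt O(T^{3/4})$ for fixed $n,g$. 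Hence $\epsilon_T=\wt O(T^{3/4})$, and dividing by $T$ gives average per-round incentive $\wt O(T^{-1/4})$. Equivalently, this choice is the $\alpha=3/8$ case of Lemma~\ref{lem:truthful-ucb-stability}, which bounds the number of post-exploration mistakes by $O(T^{3/4})$.

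\emph{Main obstacle.} The delicate point is scoping the probability correctly. The approximate-truthfulness guarantee is conditional on the realized instance, while the probability statement is over instances, so the two randomness layers must stay separate. In particular, the payment-coverage event of Assumption~\ref{ass:payment-regularity} and the UCB concentration event live inside the conditional expectation, and their failure probabilities are already absorbed into the $\wt O$ bound of the theorem. Only the margin event contributes to the instance probability, which yields $1-O(Bg(n+1)^2\rho^\star_{T,\mathrm{near}})$.

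The remaining checks are routine:
\begin{itemize}
\item the density bound in the remark must apply near zero, at scale $\rho$, for large $T$;
\item the cap $M\le T$ is inactive for large $T$;
\item $C_{\mathrm{gap}}$ is taken large enough, as Lemma~\ref{lem:truthful-ucb-stability} requires.
\end{itemize}
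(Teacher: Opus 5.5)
Your proposal is correct and follows the paper's proof: invoke Remark~\ref{rem:when-margin-holds} at $\rho=\rho^\star_{T,\mathrm{near}}$ to get the margin event with instance probability $1-O(Bg(n+1)^2\rho^\star_{T,\mathrm{near}})$, then apply Theorem~\ref{thm:ucb-direct-approx-truth} conditionally on the realized instance and substitute $M\asymp\sqrt{ngT}$. Your arithmetic for the two terms and your separation of instance randomness from within-run randomness make explicit what the paper's two-line proof leaves implicit.
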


\begin{corollary}[Unilateral deviation under truthful opponents]
\label{cor:fucb-truthful-opponents}
The same approximate-truthfulness bound in
Theorem~\ref{thm:ucb-direct-approx-truth} holds for unilateral deviations
against truthful opponents, taking \(b_{-i}=k_{-i}\).
\end{corollary}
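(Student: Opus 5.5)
The statement is a specialization of Theorem~\ref{thm:ucb-direct-approx-truth}, so the plan is to check that its hypotheses hold when \(b_{-i}=k_{-i}\) and then reread the conclusion as an equilibrium statement. The theorem is stated for an arbitrary fixed profile of other bids \(b_{-i}\in[0,1]^{n-1}\). Truthful opponents are the special case \(b_j=k_j\in[0,1]\) for \(j\neq i\), which is an admissible fixed profile. The truthful baseline is then \(b^0=(k_i,k_{-i})=k\).

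First, we check the hypotheses at this profile. Assumption~\ref{ass:payment-regularity} concerns only the bid-independent exploration phase, so it does not depend on \(b_{-i}\) at all. Assumption~\ref{ass:truthful-path-margin} is stated per \((i,b_{-i})\), so it must be imposed at \(b^0=k\). In that case the scores \(S_a(c;k)=\mu_{a,c}-k_a\) are the true surpluses, and the margin condition is a condition on the instance alone. The frozen payments \(\wh q_i(c;k_{-i})\) remain independent of \(i\)'s own report, so Lemma~\ref{lem:frozen-critical-payment-accuracy} and Lemma~\ref{lem:truthful-ucb-stability} apply verbatim with \(b_{-i}=k_{-i}\). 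Lemma~\ref{lem:report-uniform-comparison} holds for every realized history and every fixed \(b_{-i}\), so it holds at \(k_{-i}\) as well.

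Next, we substitute. Theorem~\ref{thm:ucb-direct-approx-truth} then gives, for every \(b_i'\in[0,1]\),
\[
\E\!\left[\mathcal U_i^{\FUCB}(b_i',k_{-i};k_i)-\mathcal U_i^{\FUCB}(k_i,k_{-i};k_i)\right]\le \wt O\!\left(\frac{ng}{\rho_T^2}+T\sqrt{\frac{ng}{M}}\right).
\]
The per-round and near-UCB specializations (\(\wt O(T^{3/4})\) cumulative, \(\wt O(T^{-1/4})\) average) follow unchanged. Applying this for each \(i\), with the margin assumed for each \(i\) at \(k\), shows that truthful bidding is an \(\epsilon_T\)-approximate ex-post Nash equilibrium, as remarked after Definition~\ref{def:approx-truth}.

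The only step needing care is that nothing in the theorem's proof used a property of \(b_{-i}\) beyond its being fixed and independent of \(i\)'s report. In particular, the comparison benchmark in Lemma~\ref{lem:report-uniform-comparison} is the full-information truthful path at \(b^0\), which is well-defined for any fixed \(b_{-i}\). I do not expect a genuine obstacle here; the corollary is a direct instantiation.
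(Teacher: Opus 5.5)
Your proposal is correct and matches the paper's proof, which simply applies Theorem~\ref{thm:ucb-direct-approx-truth} with \(b_{-i}=k_{-i}\). Your explicit check that Assumption~\ref{ass:truthful-path-margin} must be imposed at \(b^0=k\) is a useful clarification that the paper leaves implicit; there it becomes a condition on the instance alone.
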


\begin{theorem}[Post-exploration benchmark]
\label{thm:ucb-post-exploration}
Under the assumptions of
Theorem~\ref{thm:ucb-direct-approx-truth},
\[
\epsilon_{i,T}^{\mathrm{post}}(k_i;b_{-i})
\le
\wt O\!\left(
\frac{ng}{\rho_T^2}
+
T\sqrt{\frac{ng}{M}}
\right).
\]
The bound permits the producer to choose its report after observing $H_M$.
For fixed \(n,g\), the near-UCB tuning gives
\(\epsilon_{i,T}^{\mathrm{post}}=\wt O(T^{3/4})\) under
\(\rho_T\ge\wt\Omega(T^{-3/8})\); the balanced post-exploration specialization
is in the supplement.
\end{theorem}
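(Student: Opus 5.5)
The plan is to take expectations over $H_M$ in Lemma~\ref{lem:report-uniform-comparison} and then bound the two resulting terms separately, using Lemma~\ref{lem:frozen-critical-payment-accuracy} for the payment term and Lemma~\ref{lem:truthful-ucb-stability} for the allocation term. Since Lemma~\ref{lem:report-uniform-comparison} holds for every realized history, and its right-hand side does not depend on the report, we get
\[
\epsilon_{i,T}^{\mathrm{post}}(k_i;b_{-i})
\le
2(T-M)\,\E_{H_M}[\eta(H_M)]+\E_{H_M}[Z(H_M)].
\]
The supremum is already inside, so the post-exploration adaptivity of the report costs nothing extra.

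\textbf{Payment term.} Let $G$ be the good event of Lemma~\ref{lem:frozen-critical-payment-accuracy}, which has probability at least $1-2(Tng)^{-2}$ and depends only on $H_M$.
\begin{itemize}
\item On $G$, we have $\eta(H_M)\le 2\varepsilon_M=\wt O(\sqrt{ng/M})$.
\item Off $G$, both $\wh q_i$ and $q_i^\star$ lie in $[0,1]$, so $\eta\le 1$.
\end{itemize}
Hence $\E[\eta]\le \wt O(\sqrt{ng/M})+2(Tng)^{-2}$, and $2(T-M)\E[\eta]=\wt O(T\sqrt{ng/M})$.

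\textbf{Allocation term.} Write $Z(H_M)$ as a conditional expectation of the mistake count and split on the UCB concentration event $\mathcal E$. By the tower property,
\[
\E_{H_M}[Z(H_M)]=\E\Bigl[\sum_{t>M}\1\{i_t^{\UCB}(b^0)\neq a^\star(c_t;b^0)\}\Bigr].
\]
\begin{itemize}
\item On $\mathcal E$, Lemma~\ref{lem:truthful-ucb-stability} under Assumption~\ref{ass:truthful-path-margin} bounds the count by $\wt O(ng/\rho_T^2)$.
\item Off $\mathcal E$, the count is at most $T$. I will use the standard concentration bound underlying Proposition~\ref{thm:ucb-procurement-regret}: with $C\ge 2$ and a union bound over $i,c,t$ and sample counts via martingale Azuma--Hoeffding (valid by the conditional-mean assumption), $\Prb(\mathcal E^c)=O((Tng)^{-2})$ or smaller. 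The contribution is then $O(1)$.
\end{itemize}
One point to check is that $\mathcal E$ covers the allocation estimates initialized with exploration samples. It does, since those samples are also martingale differences under the same filtration.

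\textbf{Combining and specializing.} Summing the two terms gives the stated $\wt O(ng/\rho_T^2+T\sqrt{ng/M})$. For the near-UCB specialization, take $M\asymp\sqrt{ngT}$, so that $T\sqrt{ng/M}=\wt O((ng)^{1/4}T^{3/4})$. With $\rho_T\ge C_{\mathrm{gap}}\sqrt{ng\log(2Tng)}\,T^{-3/8}$, the second part of Lemma~\ref{lem:truthful-ucb-stability} with $\alpha=3/8$ gives at most $O(T^{3/4})$ mistakes. Thus $\epsilon^{\mathrm{post}}_{i,T}=\wt O(T^{3/4})$ for fixed $n,g$.

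The main obstacle is converting the on-$\mathcal E$ bound of Lemma~\ref{lem:truthful-ucb-stability} into the expectation $Z$, because $Z$ conditions on $H_M$ and $\mathcal E$ is not $H_M$-measurable. I will avoid conditioning on $\mathcal E$ given $H_M$ and instead work with the unconditional expectation of $Z$ via the tower property, as above. This requires only the unconditional bound on $\Prb(\mathcal E^c)$.
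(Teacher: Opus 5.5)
Your proposal is correct and follows essentially the same route as the paper. You take expectations in Lemma~\ref{lem:report-uniform-comparison}, bound $\E[\eta(H_M)]$ via Lemma~\ref{lem:frozen-critical-payment-accuracy} on its good event plus the clipping bound $\eta\le 1$ off it, and bound $\E[Z(H_M)]$ via Lemma~\ref{lem:truthful-ucb-stability} on $\mathcal E$ plus $T\,\Prb(\mathcal E^c)=O(1/(Tn^2g^2))$ off it. Your explicit tower-property step (sidestepping that $\mathcal E$ is not $H_M$-measurable) and the $\alpha=3/8$, $M\asymp\sqrt{ngT}$ specialization just spell out details the paper's short proof leaves implicit.
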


The corresponding approximate individual-rationality interpretation is in the
supplement.

\section{Lower Bound and Matching Guarantees}
\label{sec:lower-bound}

We establish the lower bound in a single-context, single-producer Bernoulli
instance. Selecting the producer yields an outcome with unknown mean \(\mu\).
Its full-information score is \(\mu-b\), so the benchmark selects it for
\(b<\mu\) and rejects it for \(b>\mu\), with fixed tie-breaking at equality;
the critical ask is \(q^\star(\mu)=\mu\).

Consider a frozen-payment mechanism that computes \(\widehat q(H_M)\) from a
bid-independent history \(H_M\) of at most \(M\) observations, possibly
collected at arbitrary rounds. For the lower bound, we grant the mechanism the
favorable timing in which all payment-learning data are available before the
remaining \(L:=T-M\) rounds: allocation may continue to adapt, but the payment
cannot be updated. For cost \(k\), let \(\epsilon_T^{\mathrm{post}}(k)\) be the
expected positive gain from the best fixed report or withdrawal after observing
\(H_M\), and let \(R_T^{\mathrm{post}}(k)\) be expected welfare regret over the
remaining \(L\) rounds under truthful bidding. Let \(R_T^{\mathrm{tot}}\) be the
worst-case expected welfare regret over the full horizon, with the worst case
taken over \(\mu\) and \(k\). The lower bound then separates two losses: the
\(M\) bid-independent payment-learning decisions
cost \(\Omega(M)\) regret, while an unavoidable payment-estimation error of
order \(\Theta(M^{-1/2})\) accumulates over the remaining \(L\) rounds. Across
two near-threshold costs,
this latter loss must appear in the aggregate of post-learning regret and
incentive error, yielding the frontier \(M+(T-M)/\sqrt M\).

\sadie{We now show that the tradeoff achieved by frozen-payment mechanisms is unavoidable. It suffices to establish the lower bound in a single-context, single-producer instance. Selecting the producer generates a Bernoulli outcome with unknown mean \(\mu\). If the producer reports cost \(b\), its score is \(\mu-b\), while the outside option has score \(0\). Thus, the full-information allocation selects the producer if and only if \(b\leq\mu\), and the corresponding critical payment is \(q^\star(\mu)=\mu\). The mechanism uses \(M\) bid-independent payment-learning decisions, producing a history \(H_M\) with at most \(M\) reward observations, from which it computes a frozen payment \(\widehat q(H_M)\). For the lower bound, we grant the mechanism the favorable timing in which \(H_M\) is observed before the remaining \(L:=T-M\) rounds; during these rounds, it may continue adapting its allocation rule, but not the payment.}\shi{fixed}

\sadie{The lower bound separates two distinct losses. First, the \(M\) bid-independent payment-learning decisions incur \(\Omega(M)\) worst-case full-horizon welfare regret, denoted \(R_T^{\mathrm{tot}}\). Second, any payment estimated from at most \(M\) observations has unavoidable error of order \(M^{-1/2}\); once the payment is frozen, this error accumulates over the remaining \(L\) rounds and must appear as either truthful-path post-learning welfare regret, \(R_T^{\mathrm{post}}(k)\), or post-learning incentive error, \(\epsilon_T^{\mathrm{post}}(k)\), defined as the producer's expected utility gain from its best deviation after observing \(H_M\). Together, these effects yield the frontier \(M+(T-M)/\sqrt{M}\).}\shi{fixed}

\begin{lemma}[Frozen critical-price estimation lower bound]
\label{lem:frozen-critical-price-lb}
There exist universal constants \(c_0,c_1>0\) such that, for every frozen
payment estimator \(\widehat q=\widehat q(H_M)\) based on at most \(M\)
bid-independent Bernoulli samples, there exists \(\mu\in[1/3,2/3]\) such
that, writing \(q^\star=\mu\) and \(\Delta=c_0/\sqrt M\), we have
\(\mathbb E_\mu[A(H_M)]\ge c_1\Delta\), where
\[
    A(H_M)
    :=
    \min\left\{
        \left(
            |\widehat q(H_M)-q^\star|-\frac{\Delta}{4}
        \right)_+,
        \frac{\Delta}{4}
    \right\}.
\]
\end{lemma}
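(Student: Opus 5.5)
This is a two-point (Le Cam) argument. Take \(\mu_0=1/2-\Delta\) and \(\mu_1=1/2+\Delta\) with \(\Delta=c_0/\sqrt M\), and choose \(c_0\) small, and \(M\ge M_0\) for a universal \(M_0\), so that both lie in \([1/3,2/3]\). For small \(M\), the claim follows by enlarging \(c_0\) or shrinking \(c_1\); note that \(A\le\Delta/4\) always, so only the order \(\Delta\) matters.

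Let \(P_0,P_1\) denote the laws of \(H_M\) under the two means. The exploration is bid-independent, so the rule deciding which samples are taken does not depend on \(\mu\) except through the observed outcomes. The chain rule for KL divergence along the (possibly adaptive) sampling sequence then gives
\[
\mathrm{KL}(P_0\|P_1)\le M\cdot\mathrm{kl}(\mu_0,\mu_1)\le M\cdot\frac{(2\Delta)^2}{\mu_1(1-\mu_1)}\le 18\,M\Delta^2=18c_0^2 .
\]
By Pinsker, \(\mathrm{TV}(P_0,P_1)\le 3c_0\le 1/2\) once \(c_0\le 1/6\).

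The key pointwise step: because \(|\mu_1-\mu_0|=2\Delta\), for every history we have \(|\widehat q-\mu_0|+|\widehat q-\mu_1|\ge2\Delta\). Hence at least one of the two errors is \(\ge\Delta\), and for that index
\[
A=\min\{(|\widehat q-q^\star|-\Delta/4)_+,\Delta/4\}=\Delta/4 .
\]
So \(A_0(H)+A_1(H)\ge\Delta/4\) pointwise, where \(A_j\) is \(A\) computed with \(q^\star=\mu_j\). The same function of \(H\) enters both, with only \(q^\star\) changed.

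Now apply the standard inequality: for nonnegative \(f_0,f_1\) with \(f_0+f_1\ge a\),
\[
\E_{P_0}f_0+\E_{P_1}f_1\ge\int\min(dP_0,dP_1)\,(f_0+f_1)\ge a\,(1-\mathrm{TV}(P_0,P_1)).
\]
With \(a=\Delta/4\) this gives \(\E_0A_0+\E_1A_1\ge\Delta/8\). So some \(\mu_j\) satisfies \(\E_{\mu_j}[A]\ge\Delta/16\), and we can take \(c_1=1/16\).

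The main obstacle is the KL step when the number of observations is random and the exploration is adaptive. Bid-independence is what lets the sampling kernel cancel in the likelihood ratio. Having at most \(M\) samples then bounds the expected count by \(M\), and this holds even if \(\widehat q\) is randomized, by including its internal randomness in \(H_M\) with a \(\mu\)-independent law.
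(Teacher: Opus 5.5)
Your proposal is correct and follows the paper's proof: both use the same two-point construction $\mu=1/2\pm\Delta$, a KL bound of order $M\Delta^2$ followed by Pinsker, and the fact that $\widehat q$ cannot be close to both critical prices. The differences are minor. You handle adaptive, randomly sized histories with the KL chain rule, whereas the paper reduces to $M$ i.i.d.\ samples, which can simulate any bid-independent history. You also lower-bound $\E_0A_0+\E_1A_1$ directly through the overlap $\int\min(dP_0,dP_1)$, rather than through the probabilities of the events $\{|\widehat q-q^\star|\ge\Delta/2\}$. One small remark: your small-$M$ caveat is unnecessary, and ``enlarging $c_0$'' would go the wrong way. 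Since $c_0\le1/6$ already gives $\Delta\le1/6$, you have $\mu_0,\mu_1\in[1/3,2/3]$ for every $M\ge1$.
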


\begin{theorem}[Frozen-payment regret--incentive tradeoff]
\label{thm:frozen-payment-lb}
For any frozen-payment mechanism as above, there exists
\(\mu\in[1/3,2/3]\) and two near-threshold costs
\(k^-=q^\star-\Delta/4\) and \(k^+=q^\star+\Delta/4\), where
\(q^\star=\mu\) and \(\Delta=\Theta(M^{-1/2})\),
such that
\[
\begin{aligned}
    &\epsilon_T^{\mathrm{post}}(k^-)
    +
    \epsilon_T^{\mathrm{post}}(k^+)
    +
    R_T^{\mathrm{post}}(k^-)
    +
    R_T^{\mathrm{post}}(k^+)\\
    &\qquad\ge
    \Omega\!\left(
        \frac{T-M}{\sqrt M}
    \right).
\end{aligned}
\]
Consequently, there exists a universal constant \(C>0\) such that, if
\[
    \epsilon_T^{\mathrm{post}}
    :=
    \sup_{\mu\in[0,1]}\sup_{k\in[0,1]}
    \epsilon_{T,\mu}^{\mathrm{post}}(k),
\]
where the dependence on \(\mu\) is made explicit only in this display, then
\[
    \epsilon_T^{\mathrm{post}}
    +
    C R_T^{\mathrm{tot}}
    \ge
    \Omega\!\left(
        \frac{T-M}{\sqrt M}
    \right).
\]
Thus, in $T$-only terms, this lower bound is $\Omega(T^{3/4})$ when
$M\asymp T^{1/2}$ and $\Omega(T^{2/3})$ when $M\asymp T^{2/3}$.
\end{theorem}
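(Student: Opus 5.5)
The plan has three steps: reduce the problem to a per-round analysis for fixed $H_M$, show that on every post-learning round the estimation loss $A(H_M)$ is charged to regret or to incentive error at one of the two costs, and then average over $H_M$ using Lemma~\ref{lem:frozen-critical-price-lb}.

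\emph{Setup.} Take $\mu\in[1/3,2/3]$ and $\Delta=c_0/\sqrt M$ from Lemma~\ref{lem:frozen-critical-price-lb}, and fix the two costs $k^\pm=q^\star\pm\Delta/4$. Condition on a realized $H_M$ and write $\hat q=\hat q(H_M)$. Every post-learning round then has the following structure. The producer is paid $\hat q$ if selected. At $k^-$ the benchmark selects it, and the per-round welfare loss of not selecting is $\mu-k^-=\Delta/4$. At $k^+$ the benchmark rejects it, and the per-round loss of selecting is $\Delta/4$.

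\emph{Case analysis for fixed $H_M$.} The key is to compare the producer's options.

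Suppose $\hat q\ge q^\star+\Delta/4+s$ with $s\ge0$. Consider cost $k^+$. The producer can report a very low ask, which in the single-producer instance is taken to make selection (essentially) every round; this is the step to double-check below. Each selection earns $\hat q-k^+\ge s$. Truthful bidding, in contrast, yields welfare regret $\Delta/4$ on each round in which the producer is selected. Let $m^+$ be the expected number of post-learning rounds selected under truthful bidding at $k^+$.
\begin{itemize}
\item If $m^+\ge L/2$, then $R^{\mathrm{post}}(k^+)\ge (L/2)\Delta/4$, which is at least $(L/2)\,A(H_M)$ because $A\le\Delta/4$.
\item If $m^+<L/2$, the deviation gains at least $(L-m^+)s\ge (L/2)s$. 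Since truthful utility is at most $m^+\cdot s'$ with $s'=\hat q-k^+$, the correct per-round accounting is: the deviation earns $Ls'$ while truthful earns $m^+s'$, so the gain is $(L-m^+)s'\ge (L/2)s'\ge (L/2)A$.
\end{itemize}

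Symmetrically, suppose $\hat q\le q^\star-\Delta/4-s$. Consider cost $k^-$. The producer can withdrawal ($\bot$), earning $0$, whereas truthful bidding earns $\hat q-k^-\le -s<0$ on every selected round. If the truthful path selects on at least $L/2$ rounds in expectation, this is an incentive loss of order $Ls\ge LA$. Otherwise, the rejections cost welfare regret $\ge(L/2)\Delta/4$.

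In either case, conditional on $H_M$,
\[\epsilon^{\mathrm{post}}(k^-)+\epsilon^{\mathrm{post}}(k^+)+R^{\mathrm{post}}(k^-)+R^{\mathrm{post}}(k^+)\ge \tfrac L2 A(H_M).\]
This holds because $A(H_M)>0$ forces one of the two cases, and each case has $A\le\min(s,\Delta/4)$. The conditional bound uses the fact that $\epsilon^{\mathrm{post}}$ is defined as the expectation of the conditional positive part of the best gain, so it is legitimate to argue $H_M$ by $H_M$.

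\emph{Averaging and the corollary.} Take expectations over $H_M$ and apply Lemma~\ref{lem:frozen-critical-price-lb}. This gives the bound $\ge \tfrac L2 c_1\Delta=\Omega((T-M)/\sqrt M)$.

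For the corollary, first bound $\epsilon^{\mathrm{post}}(k^\pm)$ by $\epsilon_T^{\mathrm{post}}$. Next, $R^{\mathrm{post}}(k^\pm)\le R_T^{\mathrm{tot}}$, since post-learning regret is part of total regret. Each regret term in the sum is nonnegative because regret is pseudo-regret against the optimum. The two incentive terms therefore contribute at most $2\epsilon_T^{\mathrm{post}}$ and the two regret terms at most $2R_T^{\mathrm{tot}}$, giving the bound with $C=1$ after absorbing the factor $2$ into $\Omega$.

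The $T$-only specializations follow by substituting $M\asymp T^{1/2}$ and $M\asymp T^{2/3}$. (Separately, the $M$ exploration rounds give $R_T^{\mathrm{tot}}\ge\Omega(M)$ at instances with margin bounded away from zero, via bid-independence; this explains the frontier $M+(T-M)/\sqrt M$ but is not needed for the displayed inequality.)

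\emph{Main obstacle.} The delicate step is the deviation in the overpayment case. I need a report at which the adaptive post-learning allocation selects the producer on many rounds. Here the allocation rule may be arbitrary and adaptive, so a low bid need not guarantee selection. My first attempt is to replace "deviate to a low bid" with a direct comparison. The truthful path at $k^+$ either selects often, which is regret, or not. Among all fixed reports, the best one is at least as good as the one maximizing the expected selection count. If no report yields many selections, then $k^+$ is rejected most of the time on every report. I then transfer this to $k^-$ via the monotonicity-free observation that the allocation depends only on the report, not on the cost. So the same allocation at report $k^-$ rejects often, and that is regret at $k^-$. This cross-cost coupling is exactly why the bound sums over both costs, and it is the step I would write out most carefully.
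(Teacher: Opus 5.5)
The step you flag is a real gap as written, and it should be deleted rather than patched. Under an arbitrary adaptive post-learning allocation, no report is guaranteed to be selected, so the bullet bounding the gain by $(L-m^+)s'$ is unjustified.

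Your repair in the last paragraph is correct, and it is precisely the key idea of the paper's proof. Let $X_b(H_M)$ be the expected number of post-learning selections under report $b$, given $H_M$; it depends only on $b$, not on the true cost. So in the overpayment case, let type $k^+$ deviate to the report $k^-$. Write $X_\pm=X_{k^\pm}(H_M)$ and $s'=\hat q-k^+\ge A(H_M)$. Then the gain and the two conditional regrets add up directly:
\[
s'(X_--X_+)_+ +\tfrac{\Delta}{4}X_+ +\tfrac{\Delta}{4}(L-X_-)
\ge A(H_M)\bigl[(X_--X_+)_+ + X_+ + L-X_-\bigr]
\ge A(H_M)\,L .
\]
Write it this way rather than as a case split on ``selects often'' versus ``no report yields many selections.'' If you use a single threshold and treat gain and regret as alternatives, the branch with $X_+$ just below $L/2$ and $X_-$ just above it gets almost nothing from the gain alone. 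The regret at $k^+$ must be added in, which the additive form does automatically.

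On the underpayment side your route differs from the paper's and is valid. Withdrawal $\bot$ is included in the supremum defining $\epsilon^{\mathrm{post}}$, and it gives $(k^--\hat q)X_-+\frac{\Delta}{4}(L-X_-)\ge A(H_M)L$ using only quantities at $k^-$.

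The paper instead uses the symmetric pair of cross-deviations: $k^-$ reporting $k^+$, and $k^+$ reporting $k^-$. It treats both signs of $\hat q-q^\star$ at once through $G_-+G_+\ge A(H_M)(X_--X_+)_+$ together with $X_--X_+=L-(L-X_-)-X_+$. That argument is case-free and does not need withdrawal. Your version uses a stronger deviation on one side but needs the case split.

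Both routes give a lower bound of $L\,\mathbb E[A(H_M)]\ge c_1L\Delta$ after averaging over $H_M$. Your passage to $\epsilon_T^{\mathrm{post}}+CR_T^{\mathrm{tot}}$, via nonnegativity of per-round regret, matches the paper's.
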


\begin{lemma}[Cost of bid-independent payment learning]
\label{lem:exploration-lb}
If a mechanism uses \(M\) payment-learning allocation decisions whose
distribution is independent of the submitted bid, then the worst-case expected
regret over the full horizon satisfies
\(R_T^{\mathrm{tot}}\ge\Omega(M)\).
Consequently, $M\asymp T^{1/2}$ already forces $\Omega(T^{1/2})$ regret, and
$M\asymp T^{2/3}$ forces $\Omega(T^{2/3})$ regret.
\end{lemma}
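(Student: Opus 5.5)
The plan is to use a two-point construction in the single-producer, single-context Bernoulli instance of this section. I take two instances: \(\mu_1=1/2\) and cost \(k_1=1/4\), where selecting has per-round surplus \(1/4\), and \(\mu_2=1/2\) with cost \(k_2=3/4\), where selecting has per-round surplus \(-1/4\). Changing the cost alone does not affect anything bid-independent. The bid-independent part of the mechanism therefore produces the same distribution over its \(M\) payment-learning decisions in both instances. Here "independent of the submitted bid" is read as: the distribution of these decisions is the same whatever the report (truthfully \(k_1\) or \(k_2\)), and their law also does not depend on \(\mu\) before samples are drawn. To avoid that last subtlety, I fix \(\mu\) across the two instances and vary only \(k\).

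Let \(P\) be the expected number of payment-learning rounds in which the producer is selected. This count is the same under both costs, because the decisions are bid-independent. Every selected round costs \(1/4\) of regret under \(k_2\) relative to the benchmark, which rejects. Every unselected round costs \(1/4\) under \(k_1\), where the benchmark selects. The rounds still count toward \(\Reg\) even if a decision is an exploration pull. Summing the regret over the \(M\) rounds in the two instances gives
\[
R^{(1)}+R^{(2)}\ \ge\ \tfrac14\big(P+(M-P)\big)=\tfrac M4,
\]
so \(\max\{R^{(1)},R^{(2)}\}\ge M/8\). The post-learning rounds contribute nonnegative regret, because the benchmark is pointwise optimal. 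Hence \(R_T^{\mathrm{tot}}\ge M/8=\Omega(M)\).

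The main obstacle is interpretive, not technical. A payment-learning decision must be a genuine allocation round that counts toward regret. A mechanism that could observe outcomes without allocating would escape the bound, and the model excludes this because feedback is bandit: an outcome is observed only when the producer is selected. I would state this explicitly. I would also note that if the decisions may be randomized, the argument goes through unchanged in expectation, since \(P\) is an expectation that is identical across the two reports. The rate consequences follow by substituting \(M\asymp T^{1/2}\) or \(M\asymp T^{2/3}\).
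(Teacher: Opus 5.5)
Your proposal is correct and follows essentially the same two-point argument as the paper. Both fix \(\mu=1/2\), vary only the cost on opposite sides of the threshold, and use that the bid-independent selection probability \(p\) is the same in both instances, so the per-round regrets sum to a constant. The paper takes \(k=0\) and \(k=1\), giving gaps of \(1/2\) and the bound \(M/4\); your costs \(1/4\) and \(3/4\) give \(M/8\), and this difference in constants is immaterial.
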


\begin{corollary}[Frozen-payment regret--incentive tradeoff]
\label{cor:frozen-payment-tradeoff-lb}
For any frozen-payment mechanism using \(M\) bid-independent payment-learning
samples as above, the combined incentive-regret cost obeys the following
frontier: there exist universal constants \(c,C>0\) such that
\[
    \epsilon_T^{\mathrm{post}}
    +
    C R_T^{\mathrm{tot}}
    \ge
    c\left(
        M+\frac{T-M}{\sqrt M}
    \right).
\]
In particular, for \(M\le T/2\), its right-hand side is
$\Omega(M+T/\sqrt M)$, which is minimized at
$M\asymp T^{2/3}$ with value $\Omega(T^{2/3})$.
Thus, within the frozen-payment framework, improving regret below
the explore-then-commit scale necessarily comes with a larger incentive term. In
particular,
any frozen-payment mechanism with
$R_T^{\mathrm{tot}}=\wt O(T^{1/2})$ must have
$\epsilon_T^{\mathrm{post}}=\wt\Omega(T^{3/4})$ up to logarithmic factors.
\end{corollary}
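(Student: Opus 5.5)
The corollary combines Theorem~\ref{thm:frozen-payment-lb} and Lemma~\ref{lem:exploration-lb}; most of the work is bookkeeping of constants. From the second display of Theorem~\ref{thm:frozen-payment-lb} there is a universal $C_1>0$ with
\[
\epsilon_T^{\mathrm{post}}+C_1R_T^{\mathrm{tot}}\ge c_2\frac{T-M}{\sqrt M}.
\]
From Lemma~\ref{lem:exploration-lb} we have $R_T^{\mathrm{tot}}\ge c_3M$.

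Take $C:=2\max\{C_1,1\}$ and use $\epsilon_T^{\mathrm{post}}\ge 0$, which holds because it is an expectation of positive parts. Split the left side as
\[
\epsilon_T^{\mathrm{post}}+CR_T^{\mathrm{tot}}\ge\bigl(\epsilon_T^{\mathrm{post}}+C_1R_T^{\mathrm{tot}}\bigr)+\max\{C_1,1\}R_T^{\mathrm{tot}}.
\]
The first bracket is at least $c_2(T-M)/\sqrt M$ and the second term is at least $c_3M$. This gives the frontier with $c=\min\{c_2,c_3\}$.

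Two points need care.
\begin{itemize}
\item Both results must refer to the same mechanism and the same worst case over $(\mu,k)$. Here $R_T^{\mathrm{tot}}$ is already a supremum over $\mu$ and $k$, so evaluating it at the hard instances of either result only decreases it, and the two bounds can be added.
\item When $M$ is small (e.g.\ $M<1$ is impossible, and $M=1$ is fine), $\Delta=c_0/\sqrt M$ must still fit inside $[1/3,2/3]$ so that the costs $k^\pm$ stay in $[0,1]$. I would shrink $c_0$ to guarantee this, or note that Theorem~\ref{thm:frozen-payment-lb} already fixes such constants.
\end{itemize}

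For the specializations, when $M\le T/2$ we have $T-M\ge T/2$, so the right side is at least $(c/2)(M+T/\sqrt M)$. The function $f(M)=M+T/\sqrt M$ has derivative $1-\tfrac12 TM^{-3/2}$, which vanishes at $M^\ast=(T/2)^{2/3}$, giving minimum value $\Theta(T^{2/3})$. Since $f\ge\Omega(T^{2/3})$ for every $M$, we obtain the global bound $\Omega(T^{2/3})$.

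The main obstacle is the final claim: if $R_T^{\mathrm{tot}}=\wt O(T^{1/2})$, then $\epsilon_T^{\mathrm{post}}=\wt\Omega(T^{3/4})$. The plan is as follows.
\begin{enumerate}
\item By Lemma~\ref{lem:exploration-lb}, $c_3M\le R_T^{\mathrm{tot}}\le \wt O(T^{1/2})$, so $M\le\wt O(T^{1/2})$. In particular $M\le T/2$ for large $T$.
\item The frontier then gives
\[
\epsilon_T^{\mathrm{post}}\ge c\frac{T}{2\sqrt M}-CR_T^{\mathrm{tot}}\ge \wt\Omega(T/T^{1/4})-\wt O(T^{1/2})=\wt\Omega(T^{3/4}).
\]
\item The polylogarithmic factor hidden in $M\le \wt O(T^{1/2})$ enters through $\sqrt M$ only as a $\mathrm{polylog}$ loss. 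This is exactly what the phrase ``up to logarithmic factors'' allows.
\item For large $T$ the subtracted $\wt O(T^{1/2})$ term is dominated by the $T^{3/4}$ term. The constants hidden in $\wt O$ must be fixed before taking $T$ large.
\end{enumerate}
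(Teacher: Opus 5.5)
Your proposal is correct and follows the paper's proof: you combine the second display of Theorem~\ref{thm:frozen-payment-lb} with Lemma~\ref{lem:exploration-lb} and enlarge $C$, then specialize to $M\le T/2$ and to $R_T^{\mathrm{tot}}=\wt O(T^{1/2})$ via $M=\wt O(T^{1/2})$. The only difference is cosmetic: you add the two bounds by splitting $C=2\max\{C_1,1\}$, whereas the paper takes $C\ge C_1$ and uses $\max\{a,b\}\ge(a+b)/2$, and both give the same frontier.
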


\sadie{These are exactly the horizon exponents attained by our upper bounds for fixed \(n,g\) under the same post-exploration incentive benchmark. Choosing \(M\asymp T^{2/3}\) matches the \(\widetilde O(T^{2/3})\) regret of ETC, which has zero incentive error, and the \(\widetilde O(T^{2/3})\) regret and incentive guarantees of balanced F-UCB. Choosing \(M\asymp T^{1/2}\) matches the \(\widetilde O(T^{3/4})\) post-exploration incentive error incurred by F-UCB while retaining near-\(\sqrt{T}\) truthful-path regret.}\shi{fixed}
For fixed \(n,g\), the frontier matches our upper bounds in horizon dependence
under the same post-exploration benchmark. ETC with \(M\asymp T^{2/3}\) has
\(\widetilde O(T^{2/3})\) regret and zero incentive error; under the stated
margin condition, the balanced F-UCB specialization in the supplement matches
this scale. With \(M\asymp T^{1/2}\), near-UCB retains
\(\widetilde O(\sqrt T)\) truthful-path regret and has
\(\widetilde O(T^{3/4})\) post-exploration incentive error.

The lower bound isolates the statistical source of the tradeoff. ETC commits
allocation as well as payments and is exactly truthful at the
explore-then-commit scale; F-UCB keeps learning after payments are frozen, and
frozen-payment error then creates the matching post-learning incentive cost.

\section{Experiments}
\label{sec:experiments}

We report two experiments on the fixed three-producer, three-context instance
given in the supplement. Contexts are sampled independently; exploration
feedback is Bernoulli with mean $\mu_{i,c}$, while post-exploration feedback is
set to its conditional mean to remove simulation noise. This remains a bounded
conditional-mean reward process. Error bands and parenthesized entries report
one standard error. Exploration selects a producer uniformly conditional on
context. We use UCB constant $C=2$ and fixed, bid-independent tie-breaking;
the supplement gives the remaining implementation details.

\begin{figure}[t]
    \centering
    \includegraphics[width=0.86\linewidth]{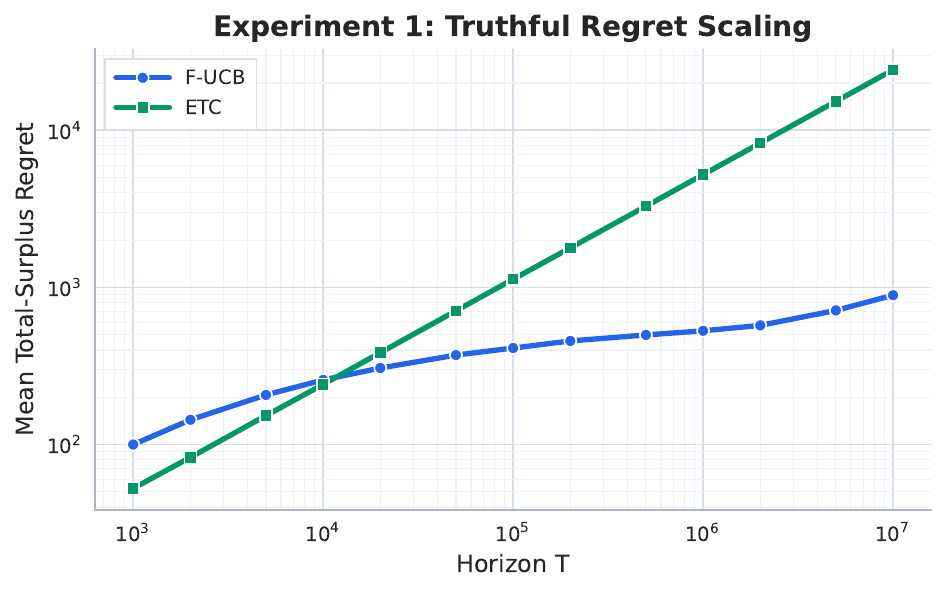}
    \caption{Truthful welfare regret on the fixed benchmark instance. ETC uses
    \(M=(ng)^{1/3}T^{2/3}\) bid-independent exploration; F-UCB uses
    \(M=T^{1/2}\) before UCB allocation with frozen payments.}
    \label{fig:exp-truthful-regret}
\end{figure}

Figure~\ref{fig:exp-truthful-regret} shows the expected qualitative pattern:
ETC pays a larger exploration cost, whereas F-UCB makes adaptive allocations
after shorter payment exploration and tracks the near-\(\sqrt T\) regret rate.

For strategic bidding, we run an ex-post F-UCB bid-grid search with
\(\epsilon=0.1\), which only sets grid spacing. This stress test gives producers
more information than the mechanism: for each simulation instance, the searched
profile may depend on realized paths. It is therefore a stress test rather than
an equilibrium computation; the supplement gives implementation details.
Table~\ref{tab:epsnet-equilibrium} compares its welfare regret with truthful
F-UCB and truthful ETC.

\begin{table}[t]
    \centering
    \begin{tabular}{@{}lccc@{}}
        \toprule
        & \multicolumn{2}{c}{F-UCB regret}
        & ETC regret\\
        \cmidrule(lr){2-3}\cmidrule(l){4-4}
        \(T\)
        & truthful
        & grid profile
        & truthful\\
        \midrule
        \(5\cdot 10^4\)
        & \(370.3\,(2.1)\)
        & \(804.6\,(324.4)\)
        & \(711.6\,(1.6)\)\\
        \(10^5\)
        & \(412.2\,(1.9)\)
        & \(650.6\,(226.0)\)
        & \(1123.0\,(2.0)\)\\
        \(2\cdot 10^5\)
        & \(451.1\,(2.3)\)
        & \(454.7\,(2.7)\)
        & \(1786.4\,(2.5)\)\\
        \(5\cdot 10^5\)
        & \(499.5\,(3.3)\)
        & \(502.0\,(3.4)\)
        & \(3288.6\,(2.4)\)\\
        \(10^6\)
        & \(528.3\,(3.3)\)
        & \(531.1\,(3.5)\)
        & \(5235.3\,(4.5)\)\\
        \bottomrule
    \end{tabular}
    \caption{Strategic bidding experiment using bid-grid spacing
    \(\epsilon=0.1\). Entries are means with one standard error over 40
    repetitions; ETC reports truthful regret, not bid-grid search.}
    \label{tab:epsnet-equilibrium}
\end{table}

Table~\ref{tab:epsnet-equilibrium} shows that strategic bids can matter at short
horizons, when frozen payments are noisy. As \(T\) grows, the bid-grid profile
approaches truthful F-UCB and remains far below ETC. This is descriptive
evidence of the finite-horizon payment/allocation feedback addressed by the
approximate-truthfulness analysis, not a separate equilibrium guarantee.

\section{Conclusion}

We study contextual procurement auctions in which a platform learns
context-dependent values while respecting producers' private costs. The results
make the learning--incentives tradeoff explicit. ETC is exactly truthful with
\(\widetilde O((ng)^{1/3}T^{2/3})\) welfare regret. Under the stated margin
condition, F-UCB with near-UCB tuning instead attains
\(\widetilde O(\sqrt{ngT})\) welfare regret and average per-round deviation gain
\(\widetilde O(T^{-1/4})\) for fixed \(n,g\).
ETC freezes both allocation and payment learning, whereas F-UCB freezes only
the payment evidence and continues to learn allocations adaptively.

Our lower bound matches the relevant horizon exponents under the same
post-exploration incentive benchmark within the frozen critical-payment class.
Thus, bid-independent payment learning supports strong welfare guarantees but
also exposes the cost of preserving incentives as allocation data adapt.
The supplement records the balanced tuning and discusses causal auction
learning and direct revenue regret. An important next step is to design payment
rules that safely use adaptive evidence without letting a producer's report
steer the evidence used to price it.
\bibliography{main}
\clearpage
\appendix
\section*{Appendix}

\section{Additional Discussion}
\label{app:additional-discussion}

\subsection{Future Directions}
One direction is to combine auction learning with peer prediction and other
tools for information elicitation without verification. Our model assumes that a
selected producer generates a conditionally unbiased reward observation. In many
procurement settings, however, the relevant quality signal may be delayed,
subjective, or only partially verifiable: reliability of a demand-response event,
usefulness of an LLM answer, or quality of an environmental project may be
assessed through peer reports, comparisons, or downstream audits rather than a
clean reward. Peer-prediction mechanisms for learning agents and comparison data
~\cite{feng2022peer,chen2024carrot} suggest ways to elicit such signals; the
auction-learning challenge is to integrate them with allocation and payment rules
while controlling both reporting incentives and welfare regret.

Another direction is causal auction learning. In many procurement settings, an
allocation is not just an observation but an intervention: activating a demand
response resource changes load, buying compute from one provider changes queueing
and latency, and environmental procurement can have spillovers across sites or
firms. Causal-bandit models~\cite{lattimore2016causal,lee2018structural},
including work on combinatorial interventions
~\cite{feng2023combinatorial,feng2025combinatorial}, the associated
pseudo-log-likelihood correction~\cite{feng2024correction}, and influence
propagation~\cite{feng2021causal}, offer tools for learning these intervention
effects while making allocation decisions.

A third direction is revenue maximization with economically interpretable
objectives. Our techniques can be adapted to virtual-cost allocation under
additional regularity, as discussed below, but virtual-cost regret is not itself
a realized welfare, payment, or buyer-surplus loss. A revenue-oriented extension
should therefore measure revenue regret directly, and then analyze how learning,
payment estimation, and incentive constraints interact under that objective.

\subsection{Virtual-Cost Variants}
The same template can be adapted to virtual costs by replacing
\(\mu_{i,c}-b_i\) with \(\mu_{i,c}-\psi_i(b_i)\), for a regular virtual-cost map
\(\psi_i\), and computing critical payments by inverting the virtual threshold.
To preserve quantitative approximate-truthfulness rates in bid space, one also
needs a quantitative regularity condition such as Lipschitzness of
\(\psi_i^{-1}\), or equivalently a strong monotonicity lower bound for
\(\psi_i\). Under such a condition, the same separation between payment
learning and allocation learning yields virtual-regret analogues of our
ETC/F-UCB bounds, with corresponding frozen-payment lower-bound exponents in
simple cases. We do not pursue this extension because virtual-cost regret is
not a primitive economic loss and is not the performance criterion optimized in
this paper.

\subsection{Additional Related Work}
Our single-unit model abstracts away from structured and capacitated procurement
constraints, where feasibility and multi-unit supply create additional
allocation structure~\cite{cary2008auctions,iyengar2008optimal}. Our
approximate-incentive guarantees are related to approximate mechanism design
~\cite{balseiro2024mechanism}, but the approximation here arises from online
learning and frozen payment estimation. On the learning side, applied and linear
contextual bandits provide standard benchmarks
~\cite{li2010contextual,chu2011contextual,abbasi2011improved}. Other
incentive-aware bandit mechanisms have been studied in crowdsourcing,
expertsourcing, and demand response
~\cite{jain2016deterministic,jain2018quality,jain2020multiarmed}. Strategic
combinatorial bandit mechanisms build on the broader combinatorial bandit
framework~\cite{chen2013combinatorial,gao2021auction}.

\subsection{A Sufficient Coverage Condition}
If the contexts are i.i.d. with \(\Prb(c_t=c)\ge\pi_{\min}>0\) for every \(c\),
and exploration selects each producer uniformly and independently of bids, then
Chernoff bounds give Assumption~\ref{ass:procurement-coverage} with
\(c_{\mathrm{cov}}=g\pi_{\min}/2\), provided
\(M\ge C_0(n/\pi_{\min})\log(Tng)\) for a sufficiently large universal
constant \(C_0\).

\subsection{Scope of F-UCB Incentives}
\label{app:scope-fucb-incentives}
Frozen-payment UCB is not claimed to be ex-post individually rational: its
adaptive allocation rule and frozen payments need not agree round by round.
The withdrawal action in Theorem~\ref{thm:ucb-post-exploration} nevertheless
gives approximate individual rationality under the same benchmark. Writing
\(\mathcal U_{i,\mathrm{post}}^{\FUCB}\) for post-exploration utility, it gives
\[
\E_{H_M}\!\left[
\left(
-\E[\mathcal U_{i,\mathrm{post}}^{\FUCB}(k_i,b_{-i};k_i)\mid H_M]
\right)_+
\right]
\le
\epsilon_{i,T}^{\mathrm{post}}(k_i;b_{-i}).
\]

\section{Balanced Tuning}
\label{app:balanced-tuning}

The main text highlights the near-UCB tuning, which preserves the
\(\wt O(\sqrt{ngT})\) welfare-regret order of UCB. The alternative balanced
tuning chooses
\[
    M_{\mathrm{bal}}
    \asymp
    (ng)^{1/3}T^{2/3},
\]
capped at \(T\), and uses the margin scale
\[
    \rho_{T,\mathrm{bal}}^\star
    =
    C_{\mathrm{gap}}\sqrt{ng\log(2Tng)}\,T^{-1/3}.
\]
This choice makes the truthful-path mismatch term no larger than the frozen
payment-estimation term. In the smoothed model of
Remark~\ref{rem:when-margin-holds}, the exceptional probability is
\(O(Bg(n+1)^2\rho_{T,\mathrm{bal}}^\star)\).

\begin{proposition}[Balanced F-UCB regret]
\label{prop:balanced-fucb-regret}
Under the hypotheses of Theorem~\ref{thm:fucb-regret}, F-UCB with
\(M=M_{\mathrm{bal}}\) satisfies
\[
    \E[\Reg(T)]
    =
    \wt O\!\left((ng)^{1/3}T^{2/3}+\sqrt{ngT}\right).
\]
In particular, for fixed \(n\) and \(g\), this is \(\wt O(T^{2/3})\).
\end{proposition}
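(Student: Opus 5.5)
This follows directly from Theorem~\ref{thm:fucb-regret}, which holds for any $M$-round bid-independent exploration phase whose observations initialize the allocation estimates. The plan is to substitute $M=M_{\mathrm{bal}}$ into its bound $\E[\Reg(T)]=\wt O(M+\sqrt{ngT})$. The only care needed is the cap at $T$ and the two regimes it creates.

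Step one is to check that $M_{\mathrm{bal}}$ is admissible. It is chosen before any bids are seen, from $n,g,T$ alone, so the exploration phase is bid-independent. Its observations initialize the allocation estimates exactly as Theorem~\ref{thm:fucb-regret} requires. No coverage assumption is needed for the regret statement, because Theorem~\ref{thm:fucb-regret} does not invoke one; coverage only enters the payment-accuracy and incentive results. The hypotheses of the present proposition are therefore exactly those of Theorem~\ref{thm:fucb-regret}.

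Step two is to plug in. If $M_{\mathrm{bal}}=\lceil c(ng)^{1/3}T^{2/3}\rceil\le T$, the theorem gives $\wt O((ng)^{1/3}T^{2/3}+1+\sqrt{ngT})$. The additive $1$ from rounding is absorbed.

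If instead the cap binds, then $M=T$ and the trivial bound $\Reg(T)\le T$ applies, since per-round regret is at most $1$ for values and costs in $[0,1]$. In this case $T\le c(ng)^{1/3}T^{2/3}$, so $T$ is itself $O((ng)^{1/3}T^{2/3})$. Both regimes therefore give the displayed bound.

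For fixed $n,g$, note that $\sqrt{T}\le T^{2/3}$ for $T\ge1$. The bound thus becomes $\wt O(T^{2/3})$.

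There is no genuine obstacle here. The only point to verify is that the $\wt O$ in Theorem~\ref{thm:fucb-regret} has constants uniform in $M$. This holds because that theorem's proof bounds the exploration regret by $M$ and the post-exploration UCB regret by $O(\sqrt{CngT\log(2Tng)})$, as in Proposition~\ref{thm:ucb-procurement-regret}, plus a failure-probability term. None of these depends on $M$ beyond the linear exploration cost.
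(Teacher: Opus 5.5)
Your proposal is correct and matches the paper's proof. The paper simply substitutes \(M\asymp(ng)^{1/3}T^{2/3}\) into the \(\wt O(M+\sqrt{ngT})\) bound of Theorem~\ref{thm:fucb-regret} and notes that the \(T^{2/3}\) term dominates \(\sqrt T\) for fixed \(n,g\). Your separate treatment of the cap at \(T\) is harmless but not needed, since Theorem~\ref{thm:fucb-regret} already applies with \(M=T\). One small slip: per-round regret can be as large as \(2\), not \(1\), because the selected surplus \(\mu_{i,c}-k_i\) can be as low as \(-1\); this changes only constants.
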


\begin{proof}
Substitute \(M\asymp(ng)^{1/3}T^{2/3}\) into the bound
\(\wt O(M+\sqrt{ngT})\) from Theorem~\ref{thm:fucb-regret}. For fixed
\(n,g\), the \(T^{2/3}\) term dominates the \(\sqrt T\) term.
\end{proof}

\begin{proposition}[Balanced approximate truthfulness]
\label{prop:balanced-approx-truth}
Under the hypotheses of Theorem~\ref{thm:ucb-direct-approx-truth}, if
\(M=M_{\mathrm{bal}}\) and
\(\rho_T\ge \rho_{T,\mathrm{bal}}^\star\), then F-UCB is
\[
    \epsilon_T
    =
    \wt O\!\left((ng)^{1/3}T^{2/3}\right)
\]
approximately truthful. For fixed \(n,g\), the cumulative deviation gain is
\(\wt O(T^{2/3})\), and the average per-round incentive is
\(\wt O(T^{-1/3})\). The same bound holds for unilateral deviations against
truthful opponents by taking \(b_{-i}=k_{-i}\).
\end{proposition}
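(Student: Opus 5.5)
The plan is to specialize the general bound of Theorem~\ref{thm:ucb-direct-approx-truth} to \(M=M_{\mathrm{bal}}\) and \(\rho_T\ge\rho_{T,\mathrm{bal}}^\star\), and check that each of its two terms is \(\wt O((ng)^{1/3}T^{2/3})\). Since the statement's hypotheses include those of Theorem~\ref{thm:ucb-direct-approx-truth}, that theorem gives, for every deviation \(b_i'\in[0,1]\),
\[
\E\!\left[\mathcal U_i^{\FUCB}(b_i',b_{-i};k_i)-\mathcal U_i^{\FUCB}(k_i,b_{-i};k_i)\right]
\le \wt O\!\left(\frac{ng}{\rho_T^2}+T\sqrt{\frac{ng}{M}}\right).
\]
No new probabilistic argument is needed.

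\emph{Payment-estimation term.} With \(M\asymp(ng)^{1/3}T^{2/3}\) we have \(ng/M\asymp(ng)^{2/3}T^{-2/3}\). Hence \(T\sqrt{ng/M}\asymp T\,(ng)^{1/3}T^{-1/3}=(ng)^{1/3}T^{2/3}\). If the cap \(M=T\) is active, then \(T\le(ng)^{1/3}T^{2/3}\), so the trivial bound \(\epsilon_T\le T\) already suffices. This holds because per-round utilities lie in \([-1,1]\) and exploration utility cancels.

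\emph{Margin term.} With \(\rho_T\ge\rho_{T,\mathrm{bal}}^\star=C_{\mathrm{gap}}\sqrt{ng\log(2Tng)}\,T^{-1/3}\) we get
\[
\frac{ng}{\rho_T^2}\le \frac{T^{2/3}}{C_{\mathrm{gap}}^2\log(2Tng)}=O(T^{2/3}).
\]
Alternatively, one can invoke the general clause of Lemma~\ref{lem:truthful-ucb-stability} with \(\alpha=1/3\), which gives \(O(T^{2\alpha})=O(T^{2/3})\) allocation mistakes. Either way, this term is at most \(\wt O(T^{2/3})\le\wt O((ng)^{1/3}T^{2/3})\).

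The main point needing care is confirming that the \(\wt O\) in Theorem~\ref{thm:ucb-direct-approx-truth} hides only logarithmic factors, and not hidden \(n,g\) factors, so the dimensional dependence \((ng)^{1/3}\) survives. I would check this against Lemma~\ref{lem:frozen-critical-payment-accuracy}, which gives \(\varepsilon_M=\wt O(\sqrt{ng/M})\), and Lemma~\ref{lem:report-uniform-comparison}.

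Summing the two terms gives \(\epsilon_T=\wt O((ng)^{1/3}T^{2/3})\). Dividing by \(T\) gives the average incentive \(\wt O(T^{-1/3})\) for fixed \(n,g\). The claim for truthful opponents follows by setting \(b_{-i}=k_{-i}\), exactly as in Corollary~\ref{cor:fucb-truthful-opponents}: the bound is uniform over \(b_{-i}\), and the margin assumption is imposed at that profile.
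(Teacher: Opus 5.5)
Your proposal is correct and follows the paper's own proof: substitute \(M_{\mathrm{bal}}\) and \(\rho_{T,\mathrm{bal}}^\star\) into the bound of Theorem~\ref{thm:ucb-direct-approx-truth}, check each term, divide by \(T\), and invoke Corollary~\ref{cor:fucb-truthful-opponents} for truthful opponents. Your explicit handling of the capped case \(M=T\) is a small extra step the paper omits. There, direct substitution gives \(\sqrt{ngT}\), which can exceed \((ng)^{1/3}T^{2/3}\), so your trivial-bound argument is what actually covers that regime.
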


\begin{proof}
Theorem~\ref{thm:ucb-direct-approx-truth} gives
\[
\epsilon_T
=
\wt O\!\left(
\frac{ng}{\rho_T^2}
+
T\sqrt{\frac{ng}{M}}
\right).
\]
With \(M\asymp(ng)^{1/3}T^{2/3}\), the payment-estimation term is
\(\wt O((ng)^{1/3}T^{2/3})\). With
\(\rho_T\ge\rho_{T,\mathrm{bal}}^\star\), the truthful-path mismatch term is
\(\wt O(T^{2/3})\), which is no larger up to the displayed \(n,g\) dependence.
Dividing by \(T\) gives the average per-round rate for fixed \(n,g\). The final
claim is Corollary~\ref{cor:fucb-truthful-opponents}.
\end{proof}

\begin{corollary}[Balanced smoothed approximate truthfulness]
\label{cor:balanced-smoothed-approx-truth}
Fix producer \(i\) and other bids \(b_{-i}\), and consider the truthful-path
profile \(b^0=(k_i,b_{-i})\) generated by a smoothed instance satisfying the
density condition in Remark~\ref{rem:when-margin-holds} with bound \(B\). Under
Assumption~\ref{ass:payment-regularity}, F-UCB with the balanced tuning is
\(\wt O(T^{2/3})\)-approximately truthful, with average per-round incentive
\(\wt O(T^{-1/3})\), at \((i,k_i,b_{-i})\) for fixed \(n,g\), with probability at
least
\[
    1-O\!\left(Bg(n+1)^2\rho_{T,\mathrm{bal}}^\star\right).
\]
The probability is over the smoothed instance; conditional on the realized
instance, the utility expectations are over contexts, reward noise, and
mechanism randomness.
\end{corollary}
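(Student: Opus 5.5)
The plan is to reduce the corollary to Proposition~\ref{prop:balanced-approx-truth} (equivalently, Theorem~\ref{thm:ucb-direct-approx-truth} with \(M=M_{\mathrm{bal}}\)). That proposition needs only Assumption~\ref{ass:payment-regularity} and the truthful-path margin Assumption~\ref{ass:truthful-path-margin} with \(\rho_T\ge\rho_{T,\mathrm{bal}}^\star\). The first is assumed directly. So it remains to show that, over the smoothed instance, the margin condition at scale \(\rho_{T,\mathrm{bal}}^\star\) fails with probability at most \(O(Bg(n+1)^2\rho_{T,\mathrm{bal}}^\star)\).

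\textbf{Step 1: bounding the bad event.} Fix \(i\) and \(b_{-i}\), and let \(b^0=(k_i,b_{-i})\). Let \(G\) be the event that, for every context \(c\in\C\) and every pair of distinct actions \(a\neq a'\) in \([n]\cup\{\empt\}\),
\[
|S_a(c;b^0)-S_{a'}(c;b^0)|>\rho_{T,\mathrm{bal}}^\star .
\]
On \(G\), in each context the maximizer \(a^\star(c;b^0)\) is unique. Its gap to the runner-up exceeds \(\rho_{T,\mathrm{bal}}^\star\), so Assumption~\ref{ass:truthful-path-margin} holds with \(\rho_T=\rho_{T,\mathrm{bal}}^\star\). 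For each fixed triple \((c,a,a')\), the density bound \(B\) near zero gives failure probability at most \(2B\rho_{T,\mathrm{bal}}^\star\), as in Remark~\ref{rem:when-margin-holds}. This needs \(\rho_{T,\mathrm{bal}}^\star\) to lie inside the neighborhood where the density bound holds, which is true for large \(T\); for small \(T\) the claimed probability bound is vacuous after adjusting constants. A union bound over at most \(g(n+1)^2\) triples gives \(\Prb(G^c)=O(Bg(n+1)^2\rho_{T,\mathrm{bal}}^\star)\).

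\textbf{Step 2: conditioning on the instance.} On \(G\), the instance \((\mu,k)\) is fixed. The utility expectations in Definition~\ref{def:approx-truth} are then taken only over contexts, reward noise, and mechanism randomness, which matches the conditional statement. The coverage event of Assumption~\ref{ass:payment-regularity} concerns only the context process and the exploration policy, so it is unaffected by this conditioning. Applying Proposition~\ref{prop:balanced-approx-truth} on \(G\) then gives \(\epsilon_T=\wt O((ng)^{1/3}T^{2/3})\). For fixed \(n,g\) this is \(\wt O(T^{2/3})\), and dividing by \(T\) gives the per-round rate \(\wt O(T^{-1/3})\).

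\textbf{Main obstacle: measurability and independence.} I expect the delicate point to be that the margin is random through the instance, whereas the theorem treats \(\rho_T\) as deterministic. I would handle this by stating the theorem pointwise: it holds for every instance in \(G\), with hidden constants uniform over instances. This requires checking that the \(\wt O\) constants in Lemma~\ref{lem:truthful-ucb-stability} and Lemma~\ref{lem:frozen-critical-payment-accuracy} depend only on \(C\), \(c_{\mathrm{cov}}\), and \(C_{\mathrm{gap}}\), not on \(\mu\) or \(k\). It also requires that the instance randomness is independent of the mechanism randomness and the reward noise, so that conditioning on \(G\) does not change their laws.
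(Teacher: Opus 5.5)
Your proposal is correct and follows the paper's own proof: apply the union bound of Remark~\ref{rem:when-margin-holds} at \(\rho=\rho_{T,\mathrm{bal}}^\star\) to obtain the margin event with the stated probability, then invoke Proposition~\ref{prop:balanced-approx-truth} conditional on that event. Your extra checks make explicit what the paper's two-line proof leaves implicit: the hidden constants must be uniform over instances, and the instance randomness must be independent of the context, reward, and mechanism randomness.
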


\begin{proof}
Apply Remark~\ref{rem:when-margin-holds} with
\(\rho=\rho_{T,\mathrm{bal}}^\star\). Conditional on the margin event,
Proposition~\ref{prop:balanced-approx-truth} applies.
\end{proof}

\begin{proposition}[Balanced post-exploration benchmark]
\label{prop:balanced-post-exploration}
Under the hypotheses of Theorem~\ref{thm:ucb-post-exploration}, if
\(M=M_{\mathrm{bal}}\) and
\(\rho_T\ge \rho_{T,\mathrm{bal}}^\star\), then
\[
    \epsilon_{i,T}^{\mathrm{post}}(k_i;b_{-i})
    =
    \wt O\!\left((ng)^{1/3}T^{2/3}\right).
\]
For fixed \(n,g\), this gives cumulative post-exploration deviation gain
\(\wt O(T^{2/3})\) and average gain \(\wt O(T^{-1/3})\).
\end{proposition}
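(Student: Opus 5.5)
This is a direct specialization of Theorem~\ref{thm:ucb-post-exploration}, in the same way that Proposition~\ref{prop:balanced-approx-truth} specializes Theorem~\ref{thm:ucb-direct-approx-truth}. The hypotheses of Theorem~\ref{thm:ucb-post-exploration} are assumed, so we may start from its bound
\[
\epsilon_{i,T}^{\mathrm{post}}(k_i;b_{-i})
\le
\wt O\!\left(\frac{ng}{\rho_T^2}+T\sqrt{\frac{ng}{M}}\right),
\]
which holds for every exploration length \(M\) and margin \(\rho_T\) meeting those hypotheses. We then bound each term under \(M=M_{\mathrm{bal}}\) and \(\rho_T\ge\rho_{T,\mathrm{bal}}^\star\).

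\emph{Payment-estimation term.} First take \(M_{\mathrm{bal}}\asymp(ng)^{1/3}T^{2/3}<T\), so the cap is inactive. Then
\[
T\sqrt{\frac{ng}{M}}\asymp T\,(ng)^{1/2}(ng)^{-1/6}T^{-1/3}=(ng)^{1/3}T^{2/3}.
\]
If instead the cap binds (\(M=T\)), then \((ng)^{1/3}T^{2/3}\ge T\). In that case the trivial bound \(\epsilon^{\mathrm{post}}\le O(T)\) already suffices, since per-round payments and costs lie in \([0,1]\).

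\emph{Mismatch term.} From \(\rho_T\ge C_{\mathrm{gap}}\sqrt{ng\log(2Tng)}\,T^{-1/3}\) we get
\[
\frac{ng}{\rho_T^2}\le \frac{T^{2/3}}{C_{\mathrm{gap}}^2\log(2Tng)}=\wt O(T^{2/3}).
\]
Equivalently, apply the general form of Lemma~\ref{lem:truthful-ucb-stability} with \(\alpha=1/3\), which gives at most \(O(T^{2/3})\) mistakes. This is at most \((ng)^{1/3}T^{2/3}\) because \(ng\ge1\).

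Adding the two terms gives \(\wt O((ng)^{1/3}T^{2/3})\). For fixed \(n,g\) this is \(\wt O(T^{2/3})\), and dividing by \(T\) gives the average gain \(\wt O(T^{-1/3})\).

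The only step needing care is consistency between the two tunings. Theorem~\ref{thm:ucb-post-exploration} is stated for general \(M\) and \(\rho_T\), with the near-UCB choice appearing only as an example. I will check that its proof, via Lemma~\ref{lem:report-uniform-comparison} together with Lemmas~\ref{lem:frozen-critical-payment-accuracy} and~\ref{lem:truthful-ucb-stability}, uses no property of \(M\) beyond the coverage condition in Assumption~\ref{ass:payment-regularity}. That condition needs \(M\gtrsim ng\log(Tng)\), which \(M_{\mathrm{bal}}\) satisfies for large \(T\). I will also account for the low-probability failure events of concentration and coverage. Their probability is \(O((Tng)^{-2})\), and each contributes at most \(O(T)\) utility, so together they add only \(O(1)\).
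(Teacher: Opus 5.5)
Your proposal is correct and matches the paper's proof, which substitutes $M=M_{\mathrm{bal}}$ and $\rho_T\ge\rho_{T,\mathrm{bal}}^\star$ into the bound of Theorem~\ref{thm:ucb-post-exploration}, exactly as in the proof of Proposition~\ref{prop:balanced-approx-truth}. Your extra checks of the capped case, the coverage requirement, and the failure-event contribution are harmless; the paper leaves them implicit because the hypotheses and proof of that theorem already absorb them.
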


\begin{proof}
Substitute the balanced values of \(M\) and \(\rho_T\) into
Theorem~\ref{thm:ucb-post-exploration}, exactly as in the proof of
Proposition~\ref{prop:balanced-approx-truth}.
\end{proof}

\section{Experimental Details}
\label{app:experimental-details}

Both experiments use the fixed instance
\[
\begin{aligned}
    \mu =
    \begin{pmatrix}
        0.55 & 0.70 & 0.95\\
        0.90 & 0.39 & 0.56\\
        0.46 & 0.92 & 0.61
    \end{pmatrix},
    \quad
    k=(0.25,0.17,0.22),\\
    \Pr(c)=(0.31,0.37,0.32).
\end{aligned}
\]
Contexts are sampled independently from this distribution. ETC uses
\(M=(ng)^{1/3}T^{2/3}\), rounded to the nearest integer, bid-independent
exploration rounds. F-UCB uses \(M=T^{1/2}\), also rounded to the nearest
integer, bid-independent exploration rounds for its frozen payments; this is the
fixed-$n,g$ specialization of the near-UCB tuning. It then updates allocation
estimates using UCB. During
exploration, the producer is selected uniformly conditional on the realized
context. Individual exploration rewards are Bernoulli with mean
$\mu_{i,c}$. To batch the large-horizon post-exploration computation, subsequent
feedback is set equal to its conditional mean $\mu_{i,c}$. This remains a bounded
conditional-mean reward process of the form assumed in the model, while removing
only post-exploration sampling variance.

The implementation uses
\[
\beta_{i,c}(t)
=
\sqrt{
\frac{2\log(\max\{4,2Tng\})}
{N_{i,c}(t)\vee1}
},
\]
corresponding to $C=2$ in the theoretical UCB rule. Ties between producers are
resolved in favor of the smallest index, and the outside option is selected
whenever the largest producer score is nonpositive. These rules are fixed ex
ante and bid-independent. Figure~\ref{fig:exp-truthful-regret} uses 120
repetitions for $T\le10^5$, 60 for $10^5<T\le10^6$, and 20 for $T>10^6$; its
bands show one standard error. Table~\ref{tab:epsnet-equilibrium} uses 40
repetitions at every reported horizon.

For the strategic-bidding experiment, producers choose bids from the grid
\(\{0,0.1,0.2,\ldots,1\}\), augmented with the truthful costs. Thus
\(\epsilon=0.1\) is the base bid-grid spacing, not an equilibrium-approximation
parameter. For each simulated instance, a bid-grid search initialized at the
truthful profile returns the strategic profile at which welfare regret is
evaluated; implementation details are provided with the accompanying code. This
is an ex-post stress test: the profile may depend on the realized simulation
instance, whereas producers in the mechanism submit bids before the realized
reward and context sample paths are known. The ETC column in
Table~\ref{tab:epsnet-equilibrium} reports truthful ETC regret. It is included
as a benchmark rather than as a bid-grid search result.

\section{Deferred Proofs}
\label{app:proofs}

\begin{proof}[Proof of Proposition~\ref{prop:oracle-dsic}]
Fix $i$, a context $c$, and $b_{-i}$, and write
$\theta_i(c;b_{-i})=\mu_{i,c}-H_i(c;b_{-i})$.
Producer $i$ can win only by reporting a bid no larger than
$\theta_i(c;b_{-i})$, with endpoint ties resolved by the fixed tie-breaking
rule. If $\theta_i(c;b_{-i})<0$, no feasible bid in $[0,1]$ wins and truthful
utility is zero. If $\theta_i(c;b_{-i})>1$, every feasible bid wins and the
payment is $1$, so truthful utility is $1-k_i\ge0$ and no report changes the
allocation. Finally, if $\theta_i(c;b_{-i})\in[0,1]$, the usual critical-value
argument applies with critical ask $q_i^\star(c;b_{-i})=\theta_i(c;b_{-i})$:
types below the threshold prefer winning at the threshold payment, types above
it prefer losing, and equality gives zero utility either way. These three cases
prove dominant-strategy truthfulness and ex-post individual rationality.
\end{proof}

\begin{proof}[Proof of Proposition~\ref{thm:ucb-procurement-regret}]
For each pair $(i,c)$, enumerate the observations collected when product $i$ is
selected in context $c$. By the bounded martingale version of Hoeffding's
inequality and a union bound over all $(i,c)$ pairs and all sample counts up
to $T$, with probability at least $1-(Tng)^{-2}$, the event
$\mathcal E=\{\forall i,c,t:
|\wh\mu_{i,c}(t)-\mu_{i,c}|\le\beta_{i,c}(t)\}$ holds for \(C\) sufficiently
large in Hoeffding's bound.

Condition on $\mathcal E$, and let
$i_t^\star\in\arg\max_{a\in[n]\cup\{\empt\}}S_a(c_t;k)$ be the
full-information procurement choice under truthful bidding. If \(N_{i,c_t}(t)=0\),
then \(\beta_{i,c_t}(t)\ge1\) and \(\wh\mu_{i,c_t}(t)=0\), so
\(\wt\mu_{i,c_t}(t)\ge\mu_{i,c_t}\). If \(N_{i,c_t}(t)>0\), the same inequality
follows from \(\mathcal E\). Thus every optimistic score is at least its true
score.

Because $i_t$ maximizes the optimistic score,
\[
    \wt S_{i_t}(c_t;k,t)
    \ge
    \wt S_{i_t^\star}(c_t;k,t)
    \ge
    S_{i_t^\star}(c_t;k).
\]
If $i_t\neq\empt$, then
\[
\begin{aligned}
    S_{i_t^\star}(c_t;k)-S_{i_t}(c_t;k)
    &\le
    \wt S_{i_t}(c_t;k,t)-S_{i_t}(c_t;k)  \\
    &=
    \wt\mu_{i_t,c_t}(t)-\mu_{i_t,c_t}     \\
    &\le
    2\beta_{i_t,c_t}(t).
\end{aligned}
\]
If $i_t=\empt$, then every optimistic producer score is nonpositive. Since true
scores are no larger than optimistic scores on $\mathcal E$, all true producer
scores are also nonpositive, so the outside option is full-information optimal and
the regret in that round is zero.

Therefore, on $\mathcal E$,
\[
    \Reg(T)
    \le
    2\sum_{t:i_t\neq\empt}\beta_{i_t,c_t}(t).
\]
The standard counting argument (Cauchy--Schwarz over the $ng$ pairs and their
sample counts) yields
\[
    \sum_{t:i_t\neq\empt}\beta_{i_t,c_t}(t)
    =
    \wt O(\sqrt{ngT}).
\]
On the failure event, the per-round regret is at most $O(1)$, so its
contribution to expected regret is
$O(T\cdot(Tng)^{-2})=O(1/(Tn^2g^2))$. This proves the proposition.
\end{proof}

\begin{proof}[Proof of Theorem~\ref{thm:etc-procurement-regret}]
The exploration phase lasts $M$ rounds. Since each true score lies in
$[-1,1]$, the regret in any single round is at most $2$, so the exploration
regret is at most $O(M)$.

By Assumption~\ref{ass:procurement-coverage} and the bounded martingale
concentration inequality, with probability at least $1-2(Tng)^{-2}$,
$\varepsilon_M:=\max_{i,c}|\wh\mu^0_{i,c}-\mu_{i,c}|
=\wt O(\sqrt{ng/M})$.
Condition on this event. For any context $c$, let
$i^\star(c)\in\arg\max_{a\in[n]\cup\{\empt\}}S_a(c;k)$ be the
full-information choice and
$\wh i(c)\in\arg\max_{a\in[n]\cup\{\empt\}}\wh S_a^0(c;k)$ the empirical
choice.

Since every empirical score differs from the corresponding true score by at
most $\varepsilon_M$, the standard plug-in optimality argument gives
$S_{i^\star(c)}(c;k)-S_{\wh i(c)}(c;k)\le2\varepsilon_M$, where the outside
option has score $0$. Therefore the commit-phase regret is at most
$O(T\varepsilon_M)=\wt O(T\sqrt{ng/M})$.
On the complementary event, the total regret is at most $O(T)$, whose
expected contribution is negligible. Adding the exploration and commit terms
gives the stated regret bound. The displayed choices of $M$ give the
corresponding rates.
\end{proof}

\begin{proof}[Proof of Theorem~\ref{thm:etc-procurement-truthfulness}]
During exploration, neither allocation nor payments depend on bids. Therefore a
producer cannot affect its exploration allocation, its exploration payment, or
the data used to construct $\wh\mu^0$ by changing its bid.

Now condition on any realized exploration history and hence on a fixed estimate
$\wh\mu^0$. In the commit phase, producer $i$'s score in context $c$ is
$\wh\mu^0_{i,c}-b_i$.
Since this score is strictly decreasing in $b_i$ and tie-breaking is fixed ex
ante, the allocation rule is monotone nonincreasing in $b_i$. The payment is
the clipped critical ask obtained from the fixed estimates $\wh\mu^0$. Thus the
same three-case threshold argument as in Proposition~\ref{prop:oracle-dsic}
proves truthfulness and ex-post individual rationality conditional on this
exploration history.

Since the exploration phase is bid-independent and the commit phase is truthful
conditional on every exploration history, truthful bidding is a dominant
strategy for the entire mechanism. The exploration payment $1$ covers every
cost in $[0,1]$, and the commit-phase critical payment is ex-post individually
rational by Proposition~\ref{prop:oracle-dsic}, proving the final claim.
\end{proof}

\begin{proof}[Proof of Theorem~\ref{thm:fucb-regret}]
The exploration phase contributes at most $2M=O(M)$ regret. Afterwards,
the allocation rule is precisely the UCB rule of
Proposition~\ref{thm:ucb-procurement-regret}, initialized with the exploration
observations. Initial observations can only decrease the subsequent
confidence radii. Repeating the counting argument in that proposition over the
$T-M$ post-exploration rounds gives $\wt O(\sqrt{ngT})$ expected regret,
including the negligible concentration-failure contribution.
\end{proof}

\begin{proof}[Proof of Lemma~\ref{lem:frozen-critical-payment-accuracy}]
By uniform exploration coverage and the bounded martingale concentration
inequality, with probability at
least $1-2(Tng)^{-2}$, $\varepsilon_M=\wt O(\sqrt{ng/M})$.
Condition on this event.

Fix producer $i$, context $c$, and other bids $b_{-i}$. Define the true
threshold faced by producer $i$ as
$H_i(c;b_{-i})=\max\{0,\max_{j\neq i}(\mu_{j,c}-b_j)\}$, and the frozen
empirical threshold as
$\wh H_i(c;b_{-i})=\max\{0,\max_{j\neq i}(\wh\mu^{\mathrm{pay}}_{j,c}-b_j)\}$.
Since every value estimate differs from its true value by at most
$\varepsilon_M$, and the $\max$ operator is $1$-Lipschitz, we have
$|\wh H_i(c;b_{-i})-H_i(c;b_{-i})|\le\varepsilon_M$.

By definition of the clipped VCG threshold,
$q_i^\star(c;b_{-i})=[\mu_{i,c}-H_i(c;b_{-i})]_{[0,1]}$, while
$\wh q_i(c;b_{-i})
=[\wh\mu^{\mathrm{pay}}_{i,c}-\wh H_i(c;b_{-i})]_{[0,1]}$.
The two scalar inputs differ by at most $2\varepsilon_M$. Since projection onto
$[0,1]$ is $1$-Lipschitz, $|\wh q_i(c;b_{-i})-q_i^\star(c;b_{-i})|
\le2\varepsilon_M$.
This proves the claim.
\end{proof}

\begin{proof}[Proof of Lemma~\ref{lem:truthful-ucb-stability}]
The proof is pathwise after fixing the truthful bid profile $b^0$. All empirical
means, sample counts, and confidence radii are those generated by the UCB
trajectory under $b^0$.

Condition on $\mathcal E$. In a post-exploration round $t$ with context
$c_t=c$, let $\wt S_j(c;b^0,t)=\wh\mu_{j,c}(t)+\beta_{j,c}(t)-b^0_j$
for $j\in[n]$, and let $\wt S_{\empt}(c;b^0,t)=0$. On $\mathcal E$,
$S_j(c;b^0)\le\wt S_j(c;b^0,t)\le S_j(c;b^0)+2\beta_{j,c}(t)$ for every
producer $j$.

Let $a^\star=a^\star(c;b^0)$. First suppose $a^\star\in[n]$. By the margin
condition, since the outside option is not optimal,
$S_{a^\star}(c;b^0)\ge\rho_T$.
Thus the optimistic score of $a^\star$ is positive, so UCB cannot choose the
outside option. If UCB selects an incorrect producer $j\neq a^\star$, then
$\wt S_j(c;b^0,t)\ge\wt S_{a^\star}(c;b^0,t)\ge S_{a^\star}(c;b^0)$, whereas
$\wt S_j(c;b^0,t)\le S_j(c;b^0)+2\beta_{j,c}(t)$. Hence
$2\beta_{j,c}(t)\ge S_{a^\star}(c;b^0)-S_j(c;b^0)\ge\rho_T$.

Now suppose $a^\star=\empt$. Then $S_j(c;b^0)\le-\rho_T$ for every $j\in[n]$.
If UCB selects producer $j$, then $\wt S_j(c;b^0,t)\ge0$, whereas
$\wt S_j(c;b^0,t)\le S_j(c;b^0)+2\beta_{j,c}(t)
\le-\rho_T+2\beta_{j,c}(t)$. Thus again
$\beta_{j,c}(t)\ge\rho_T/2$.

Thus every truthful-path allocation mistake can be charged to a selected
producer-context pair $(j,c)$ with $\beta_{j,c}(t)\ge\rho_T/2$. Since
$\beta_{j,c}(t)=\sqrt{C\log(2Tng)/(N_{j,c}(t)\vee1)}$, this can happen only
while $N_{j,c}(t)\le O(\log(2Tng)/\rho_T^2)$.
Each selection of $j$ in context $c$ increments $N_{j,c}$ by one, so the number
of such selections is at most this same quantity. Summing over all $ng$
producer-context pairs gives the claimed bound.
\end{proof}

\begin{proof}[Proof of Lemma~\ref{lem:report-uniform-comparison}]
Put $d_i^\star(c)=q_i^\star(c;b_{-i})-k_i$ and
$\wh d_i(c)=\wh q_i(c;b_{-i})-k_i$. In a run using any report,
let $x_t'$ be producer $i$'s selection indicator. Since $x_t'\in\{0,1\}$,
for every post-exploration context $c_t$,
\[
\wh d_i(c_t)x_t'\le \wh d_i(c_t)^+
\le d_i^\star(c_t)^++\eta(H_M).
\]
For the truthful run, let $x_t^0$ be its selection indicator and let
$x_t^\star$ indicate that the full-information truthful allocation selects
$i$. The clipped critical payment implies the two facts needed below: if
$x_t^\star=1$, then $d_i^\star(c_t)\ge0$; if $x_t^\star=0$, then
$(d_i^\star(c_t))^+=0$. Boundary cases created by clipping or tie-breaking have
zero utility and satisfy the same inequalities. Hence
\[
\wh d_i(c_t)x_t^0
\ge d_i^\star(c_t)^+-\eta(H_M)-\1\{x_t^0\neq x_t^\star\}.
\]
If $x_t^0=x_t^\star$, this follows from the payment error bound. Indeed, when
both indicators are one, $\wh d_i(c_t)\ge d_i^\star(c_t)-\eta(H_M)$; when both
are zero, the right-hand side is at most \(0\) because
\((d_i^\star(c_t))^+=0\). If the indicators differ, there are two cases. If
$x_t^0=0$ and $x_t^\star=1$, then the right-hand side is at most
$d_i^\star(c_t)-1\le0$, since the clipped payment and \(k_i\in[0,1]\) imply
\(d_i^\star(c_t)\le1\). If
$x_t^0=1$ and $x_t^\star=0$, then \((d_i^\star(c_t))^+=0\), so the right-hand
side is at most \(-1\), while \(\wh d_i(c_t)\ge-1\).

The two reports induce the same conditional law of future
contexts, because contexts are exogenous and action-independent. Taking
conditional expectations in the first display under the deviating report and
in the second display under truthful bidding, then subtracting, bounds the
expected gain by
$2(T-M)\eta(H_M)$ plus the expected number of post-exploration rounds with
$x_t^0\neq x_t^\star$. This selection-indicator mismatch is at most the action
mismatch that defines $Z(H_M)$, so the claim follows. No coupling of the two
adaptive reward paths is needed.
\end{proof}

\begin{proof}[Proof of Theorem~\ref{thm:ucb-direct-approx-truth}]
Lemma~\ref{lem:report-uniform-comparison} gives the same upper bound for the
post-exploration supremum; in particular, the expected gain of any fixed report
is at most $\E[2(T-M)\eta(H_M)+Z(H_M)]$. The
payment-accuracy lemma gives
$\eta(H_M)=\wt O(\sqrt{ng/M})$ on its high-probability event, while always
$\eta(H_M)\le1$ because both payments are clipped to $[0,1]$. Hence
$\E[\eta(H_M)]=\wt O(\sqrt{ng/M})$.
For the truthful UCB path, Lemma~\ref{lem:truthful-ucb-stability} bounds the
realized number of allocation mistakes by $\wt O(ng/\rho_T^2)$ on the UCB
concentration event. On the complement, the number of mistakes is at most $T$,
and the concentration failure probability is $O((Tng)^{-2})$. Thus
$\E[Z(H_M)]=\wt O(ng/\rho_T^2)$. This proves both conclusions.
\end{proof}

\begin{proof}[Proof of Corollary~\ref{cor:smoothed-approx-truth}]
Apply Remark~\ref{rem:when-margin-holds} with
\(\rho=\rho_{T,\mathrm{near}}^\star\). Conditional on the corresponding margin
event, Theorem~\ref{thm:ucb-direct-approx-truth} applies. Substituting the
near-UCB tuning gives the displayed rate after suppressing logarithmic factors.
\end{proof}

\begin{proof}[Proof of Corollary~\ref{cor:fucb-truthful-opponents}]
Apply Theorem~\ref{thm:ucb-direct-approx-truth} with $b_{-i}=k_{-i}$.
\end{proof}

\begin{proof}[Proof of Theorem~\ref{thm:ucb-post-exploration}]
Lemma~\ref{lem:report-uniform-comparison} bounds the post-exploration supremum,
including the withdrawal action, by $2(T-M)\eta(H_M)+Z(H_M)$. The estimates
in the preceding proof give $\E[\eta(H_M)]=\wt O(\sqrt{ng/M})$ and
$\E[Z(H_M)]=\wt O(ng/\rho_T^2)$. Taking expectations proves the theorem.
\end{proof}

\begin{proof}[Proof of Lemma~\ref{lem:frozen-critical-price-lb}]
It suffices to prove the lower bound even if the mechanism is given \(M\) direct
Bernoulli samples from the producer, since this only gives the mechanism more
information than an arbitrary bid-independent payment-learning history with at
most \(M\) such samples.

Let \(\mu_+=1/2+\Delta\) and \(\mu_-=1/2-\Delta\).
For \(c_0\) sufficiently small and \(M\) sufficiently large, both values lie in
\([1/3,2/3]\). The corresponding critical asks are
\(q^\star_+=\mu_+=1/2+\Delta\) and
\(q^\star_-=\mu_-=1/2-\Delta\), so
\(|q^\star_+-q^\star_-|=2\Delta\).

For Bernoulli distributions bounded away from \(0\) and \(1\), there is a
universal constant \(C\) such that
\[
    \mathrm{KL}\!\left(
        \mathrm{Bern}(\mu_+) \,\middle\|\, \mathrm{Bern}(\mu_-)
    \right)
    \le
    C(\mu_+-\mu_-)^2.
\]
Therefore
\[
    \mathrm{KL}(P_+^M\,\|\,P_-^M)
    \le
    C M(\mu_+-\mu_-)^2
    =
    4 C M\Delta^2
    =
    4 C c_0^2.
\]
Choosing \(c_0\) small enough and applying Pinsker's inequality gives
\(\mathrm{TV}(P_+^M,P_-^M)\le1/4\).

No estimator \(\widehat q\) can be within distance \(\Delta/2\) of both
\(q^\star_+\) and \(q^\star_-\). Hence
\[
\begin{aligned}
    &P_+\!\left(
        |\widehat q-q^\star_+|\ge \frac{\Delta}{2}
    \right)
    +
    P_-\!\left(
        |\widehat q-q^\star_-|\ge \frac{\Delta}{2}
    \right)
    \\
    &\qquad\ge
    1-\mathrm{TV}(P_+^M,P_-^M)
    \ge
    \frac34 .
\end{aligned}
\]
Therefore at least one of the two values
\(\mu\in\{\mu_+,\mu_-\}\) satisfies
\[
    P_\mu\!\left(
        |\widehat q-q^\star|\ge \frac{\Delta}{2}
    \right)
    \ge
    \frac38.
\]
On this event, \(A(H_M)\ge\Delta/4\), and hence
\(\mathbb E_\mu[A(H_M)]\ge(3/8)(\Delta/4)=3\Delta/32\).
The claim follows with \(c_1=3/32\). The case of small \(M\) can be absorbed by
adjusting constants.
\end{proof}

\begin{proof}[Proof of Theorem~\ref{thm:frozen-payment-lb}]
Fix the value \(\mu\) supplied by the previous lemma, and write
\(q^\star=\mu\), \(\Delta=\Theta(M^{-1/2})\),
\(k^-=q^\star-\Delta/4\), and \(k^+=q^\star+\Delta/4\).
Let \(L=T-M\). Conditional on \(H_M\), for a reported bid \(b\), define
\[
    X_b(H_M)
    :=
    \mathbb E
    \left[
        \sum_{t=M+1}^T \mathbf 1\{i_t(b)=1\}
        \,\middle|\, H_M
    \right],
\]
where the expectation is over post-learning reward noise and internal
randomness. The true cost affects utility but not the reward distribution, so
\(X_b(H_M)\) depends on \(b\), not on the true cost. For a true cost \(k\), let
\(U(b;k)\) denote the random total payment-minus-cost utility accrued over the
\(L\) post-learning rounds when the producer reports \(b\), and set
\[
\Gamma_b(k;H_M):=\mathbb E[U(b;k)-U(k;k)\mid H_M],
\]
\[
    \epsilon_T^{\mathrm{post}}(k)
    :=
    \mathbb E_{H_M}
    \left[
        \sup_{b'\in[0,1]\cup\{\bot\}}
        \left(\Gamma_{b'}(k;H_M)\right)_+
    \right],
\]
where \(U(\bot;k)=0\).

Under cost \(k^-\), the producer is full-information optimal because
\(\mu-k^-=\mu-q^\star+\Delta/4=\Delta/4>0\).
Thus each post-learning round in which the producer is not selected incurs
regret \(\Delta/4\).

Under cost \(k^+\), the outside option is full-information optimal because
\(\mu-k^+=\mu-q^\star-\Delta/4=-\Delta/4<0\).
Thus each post-learning round in which the producer is selected incurs
regret \(\Delta/4\).

For a realized payment-learning history \(H_M\), write
\(X_-=X_{k^-}(H_M)\) and \(X_+=X_{k^+}(H_M)\).
The conditional post-learning regret under truthful bidding at cost \(k^-\)
is \(\Delta(L-X_-)/4\),
and the conditional post-learning regret under truthful bidding at cost
\(k^+\) is \(\Delta X_+/4\). Therefore
\(\mathbb E[L-X_-]=4R_T^{\mathrm{post}}(k^-)/\Delta\) and
\(\mathbb E[X_+]=4R_T^{\mathrm{post}}(k^+)/\Delta\).

Now we lower bound profitable deviations. Let
\(d(H_M):=X_- - X_+\).

First consider the low-cost type \(k^-\). Its truthful and upward-deviation
utilities are respectively \((\widehat q-k^-)X_-\) and
\((\widehat q-k^-)X_+\).
Thus define the positive gain from this upward deviation as
\[
    G_-(H_M)
    :=
    \left(
        (\widehat q-k^-)(X_+-X_-)
    \right)_+ .
\]

Next consider the high-cost type \(k^+\). Its truthful and downward-deviation
utilities are respectively \((\widehat q-k^+)X_+\) and
\((\widehat q-k^+)X_-\).
Thus define the positive gain from this downward deviation as
\[
    G_+(H_M)
    :=
    \left(
        (\widehat q-k^+)(X_- - X_+)
    \right)_+ .
\]

Let \(G(H_M):=G_-(H_M)+G_+(H_M)\). We claim that
\[
\begin{aligned}
    G(H_M)
    &\ge
    \left[
        (k^- - \widehat q)_+
        +
        (\widehat q-k^+)_+
    \right]\\
    &\qquad\cdot d(H_M)_+ .
\end{aligned}
\]
Indeed, if \(d(H_M)\ge0\), then
\(G_-(H_M)=(k^- - \widehat q)_+d(H_M)\) and
\(G_+(H_M)=(\widehat q-k^+)_+d(H_M)\), so the inequality holds with
equality. If \(d(H_M)<0\), then
\(d(H_M)_+=0\), and the right-hand side is zero while
\(G(H_M)\ge0\).

Because \(k^-=q^\star-\Delta/4\) and \(k^+=q^\star+\Delta/4\), the
bracketed coefficient equals \((|\widehat q-q^\star|-\Delta/4)_+\).
Recall the truncated error
\[
    A(H_M)
    :=
    \min\left\{
        \left(
            |\widehat q-q^\star|-\frac{\Delta}{4}
        \right)_+,
        \frac{\Delta}{4}
    \right\}.
\]
Since \(A(H_M)\) is no larger than that coefficient and \(d(H_M)_+\ge0\),
we have \(G(H_M)\ge A(H_M)d(H_M)_+\ge A(H_M)d(H_M)\).
Using \(d(H_M)=X_- - X_+=L-(L-X_-)-X_+\), we get
\[
\begin{aligned}
    G(H_M)
    &\ge
    A(H_M)L
    -
    A(H_M)(L-X_-)
    -
    A(H_M)X_+.
\end{aligned}
\]
Since \(0\le A(H_M)\le\Delta/4\),
\[
\begin{aligned}
    G(H_M)
    &\ge
    A(H_M)L
    -
    \frac{\Delta}{4}(L-X_-)
    -
    \frac{\Delta}{4}X_+.
\end{aligned}
\]
Taking expectations over \(H_M\), we obtain
\[
\begin{aligned}
    \mathbb E[G]
    &\ge
    L\mathbb E[A(H_M)]
    -
    \frac{\Delta}{4}\mathbb E[L-X_-]
    -
    \frac{\Delta}{4}\mathbb E[X_+]
    \\
    &=
    L\mathbb E[A(H_M)]
    -
    R_T^{\mathrm{post}}(k^-)
    -
    R_T^{\mathrm{post}}(k^+).
\end{aligned}
\]
By the statistical lemma, \(\mathbb E[A(H_M)]\ge c_1\Delta\).
Therefore
\[
    \mathbb E[G]
    \ge
    c_1L\Delta
    -
    R_T^{\mathrm{post}}(k^-)
    -
    R_T^{\mathrm{post}}(k^+).
\]
Since \(L=T-M\) and \(\Delta=\Theta(M^{-1/2})\),
\[
    \mathbb E[G]
    \ge
    \Omega\!\left(
        \frac{T-M}{\sqrt M}
    \right)
    -
    R_T^{\mathrm{post}}(k^-)
    -
    R_T^{\mathrm{post}}(k^+).
\]

For each realized history \(H_M\), \(G_-(H_M)\) lower bounds the positive
gain available to type \(k^-\) from reporting \(k^+\). Similarly,
\(G_+(H_M)\) lower bounds the positive gain available to type
\(k^+\) from reporting \(k^-\). Thus, by the definition of the
post-learning incentive error,
\[
    \epsilon_T^{\mathrm{post}}(k^-)
    \ge
    \mathbb E[G_-],
    \qquad
    \epsilon_T^{\mathrm{post}}(k^+)
    \ge
    \mathbb E[G_+].
\]
Hence
\[
\begin{aligned}
    &\epsilon_T^{\mathrm{post}}(k^-)
    +
    \epsilon_T^{\mathrm{post}}(k^+)
    +
    R_T^{\mathrm{post}}(k^-)
    +
    R_T^{\mathrm{post}}(k^+)\\
    &\qquad\ge
    \Omega\!\left(
        \frac{T-M}{\sqrt M}
    \right).
\end{aligned}
\]

Since \(R_T^{\mathrm{post}}(k)\le R_T^{\mathrm{tot}}\) for every \(k\), and
the global \(\epsilon_T^{\mathrm{post}}\) is at least the incentive error of
either cost at the value of \(\mu\) fixed above,
we obtain, after changing universal constants, for some universal \(C>0\),
\[
    \epsilon_T^{\mathrm{post}}
    +
    C R_T^{\mathrm{tot}}
    \ge
    \Omega\!\left(
        \frac{T-M}{\sqrt M}
    \right).
\]
This proves the theorem.
\end{proof}

\begin{proof}[Proof of Lemma~\ref{lem:exploration-lb}]
Consider one context, one producer, and an outside option. Let the producer's
value be deterministic with \(\mu=1/2\). Compare \(k^0=0\) with \(k^1=1\).
Under \(k^0\), its score is \(\mu-k^0=1/2\),
so the producer is uniquely optimal. Selecting the outside option incurs regret
\(1/2\).

Under \(k^1\), its score is \(\mu-k^1=-1/2\),
so the outside option is uniquely optimal. Selecting the producer incurs regret
\(\frac12\).

In each of the \(M\) payment-learning rounds, the distribution of the allocation
decision used to collect the payment sample is independent of the submitted bid.
Since the reward distribution is also the same under \(k^0\) and \(k^1\), the
selection probability in such a round is the same under the two instances. Let
\(p\) be this conditional probability of selecting the producer. The expected
regret is \((1-p)/2\) under \(k^0\), and \(p/2\) under \(k^1\).
Averaging over the two instances gives
\(\frac12\cdot\frac12(1-p)+\frac12\cdot\frac12 p=1/4\). Hence the average
payment-learning regret per round over the two instances is at least \(1/4\).
Therefore the worst-case expected payment-learning regret is at least \(M/4\), and
so \(R_T^{\mathrm{tot}}\ge\Omega(M)\).
\end{proof}

\begin{proof}[Proof of Corollary~\ref{cor:frozen-payment-tradeoff-lb}]
The theorem gives constants \(c_1,C_1>0\) such that
\[
    \epsilon_T^{\mathrm{post}}
    +
    C_1R_T^{\mathrm{tot}}
    \ge
    c_1\frac{T-M}{\sqrt M}.
\]
The bid-independent payment-learning lower bound gives a constant \(c_2>0\) such
that \(R_T^{\mathrm{tot}}\ge c_2M\).
Choose \(C\ge C_1\). Then
\[
    \epsilon_T^{\mathrm{post}}
    +
    CR_T^{\mathrm{tot}}
    \ge
    c_1\frac{T-M}{\sqrt M},
\]
and also
\[
    \epsilon_T^{\mathrm{post}}
    +
    CR_T^{\mathrm{tot}}
    \ge
    Cc_2M.
\]
After changing constants and using \(\max\{a,b\}\ge(a+b)/2\), we obtain
\[
    \epsilon_T^{\mathrm{post}}
    +
    CR_T^{\mathrm{tot}}
    \ge
    c\left(
        M+\frac{T-M}{\sqrt M}
    \right).
\]
For \(M\le T/2\), this implies
$\epsilon_T^{\mathrm{post}}+CR_T^{\mathrm{tot}}
\ge c(M+T/\sqrt M)$ after another constant adjustment. This expression is
minimized at $M\asymp T^{2/3}$, where it is $\Omega(T^{2/3})$. Finally, if
$R_T^{\mathrm{tot}}=\wt O(T^{1/2})$, then
Lemma~\ref{lem:exploration-lb} forces $M=\wt O(T^{1/2})$. Hence
$T/\sqrt M=\wt\Omega(T^{3/4})$, and the regret term is lower order than
this quantity. Therefore
$\epsilon_T^{\mathrm{post}}=\wt\Omega(T^{3/4})$.
This proves the corollary.
\end{proof}

\end{document}